\tikzset{
	place/.style={
		circle,
		thick,
		draw=blue!75,
		fill=blue!20,
		minimum size=6mm,
	},
	transitionH/.style={
		rectangle,
		thick,
		fill=black,
		minimum width=8mm,
		inner ysep=2pt
	},
	transitionV/.style={
		rectangle,
		thick,
		fill=black,
		minimum height=8mm,
		inner xsep=2pt
	}
}
\newcommand{\twod}[1]{\mathbb{B}}
\newcommand{\threed}[1]{\mathbb{B}_{\star}}
\newcommand{\var}[1]{\text{var}_{{#1}}}
\newcommand{\IN}[1]{\text{IN}({#1})}
\newcommand{\ig}[1]{\text{G}({#1})}
\newcommand{\igf}[2]{\text{G}#1({#2})}
\newcommand{\astg}[1]{\text{astg}({#1})}
\begin{document}
\title{On the number of asynchronous attractors in AND-NOT Boolean networks}
\titlerunning{On the number of asynchronous attractors in AND-NOT Boolean networks}
%
\author{Van-Giang Trinh\inst{1} \and
Samuel Pastva\inst{2} \and
Jordan Rozum\inst{3} \and Kyu Hyong Park\inst{4} and R\'eka Albert\inst{4}}
\authorrunning{Trinh et al.}
%
\institute{Inria Saclay, EP Lifeware, Palaiseau, France\\
\email{van-giang.trinh@inria.fr}
\and
Faculty of Informatics, Masaryk University, Botanicka 68a, Brno, 60200, Czechia\\
\email{xpastva@fi.muni.cz}
\and
Department of Systems Science and Industrial Engineering, Binghamton University, Engineering Building, Vestal, 13850, New York, USA\\
\email{jrozum@binghamton.edu}
\and
Department of Physics, Pennsylvania State University, Davey Laboratory, University Park, 16802, Pennsylvania, USA\\
\email{\{kjp5774,rza1\}@psu.edu}}
\maketitle              
\begin{abstract}

Boolean Networks (BNs) describe the time evolution of binary states using logic functions on the nodes of a network. They are fundamental models for complex discrete dynamical systems, with applications in various areas of science and engineering, and especially in systems biology.
A key aspect of the dynamical behavior of BNs is the number of attractors, which determines the diversity of long-term system trajectories. 
Due to the noisy nature and incomplete characterization of biological systems, a stochastic asynchronous update scheme is often more appropriate than the deterministic synchronous one.
AND-NOT BNs, whose logic functions are the conjunction of literals, are an important subclass of BNs because of their structural simplicity and their usefulness in analyzing biological systems for which the only information available is a collection of interactions among components.

In this paper, we establish new theoretical results regarding asynchronous attractors in AND-NOT BNs.
We derive two new upper bounds for the number of asynchronous attractors in an AND-NOT BN based on structural properties (strong even cycles and dominating sets, respectively) of the AND-NOT BN.
These findings contribute to a more comprehensive understanding of asynchronous dynamics in AND-NOT BNs, with implications for attractor enumeration and counting, as well as for network design and control.

\keywords{logical modeling \and AND-NOT Boolean network \and asynchronous dynamics \and attractor \and influence graph \and symbolic AI.}
\end{abstract}

\section{Introduction}\label{sec:introduction}

Boolean Networks (BNs) are discrete dynamical systems widely used to model complex interactions in various domains, ranging from sciences to engineering, notably systems biology~\cite{schwab2020concepts}. 
In these models, each variable represents a binary-state component that evolves over time according to an update scheme.
The dynamics of a BN are well specified by the set of Boolean update functions and the chosen update scheme, making it a powerful tool for studying regulatory and decision-making processes in real-world systems~\cite{Gates2021,rozum2021parity}.

The study of asynchronous attractors—the stable domains that arise under the asynchronous update scheme—is crucial for analyzing the stability and robustness of BNs~\cite{Saadatpour2010}.
Unlike the synchronous update scheme, wherein all variables are updated simultaneously, asynchronous updating introduces stochasticity by updating non-deterministically only one variable at each time step.
This update scheme is particularly relevant in biological systems and engineered systems, where components operate at different time-scales or respond to external signals asynchronously~\cite{Saadatpour2010,schwab2020concepts}.

A particularly significant subclass of BNs is the AND-NOT Boolean network class, in which each variable’s Boolean update function employs only AND (conjunction) and NOT (negation) connectives. 
Since in many biological systems the existing interaction information is given in a static way, as a signed directed graph (with activating or inhibitory arcs) on a set of components, this class of models is a way to capture the essential logical dependencies~\cite{richard2019positive,schwab2020concepts}.
Note that the influence graph of an AND-NOT BN uniquely specifies the set of Boolean update functions of this BN~\cite{https://doi.org/10.48550/arxiv.1211.5633,DBLP:journals/dam/RichardR13}.
Despite their structural simplicity, AND-NOT BNs exhibit rich dynamical behavior, including the emergence of attractors, which are fundamental in understanding long-term system evolution~\cite{https://doi.org/10.48550/arxiv.1211.5633,DBLP:journals/entcs/Veliz-CubaAL15}.
Several studies~\cite{DBLP:journals/bioinformatics/ParkLL10,https://doi.org/10.48550/arxiv.1211.5633,DBLP:journals/entcs/Veliz-CubaAL15} have shown that the class of AND-NOT BNs is sufficiently general for modeling biological dependencies and simple enough for theoretical analysis.
This class and its subtypes have attracted much attention from various research communities~\cite{Jarrah2010,DBLP:journals/dam/RichardR13,DBLP:journals/tcns/GaoCB18}.

A key goal in the study of BNs is determining the number of asynchronous attractors that a given BN can exhibit based on the structural properties of its influence graph.
Such results on the number of asynchronous attractors can have useful implications in understanding the diversity of possible long-term behaviors, as well as the stability, controllability, and computational complexity of the model considered~\cite{glass1973logical,thomas1973boolean}.
Specifically, knowing the upper and lower bounds on attractor numbers can aid the design of control-robust regulatory systems~\cite{Thomas1981,DBLP:journals/automatica/GaoCB18,Gates2021,rozum2021parity}.
Whereas previous research has explored theoretical results in general BNs, there remains a lack of rigorous theoretical results specifically addressing the quantitative limits on the number of asynchronous attractors in AND-NOT BNs.
See~\Cref{sec:related-work} for a more detailed summary of key contributions in this area.

In this work, we address the problem of relating the number of asynchronous attractors in an AND-NOT BN to the structural properties of the influence graph of this BN.
More specifically, we derive an upper bound for the number of asynchronous attractors based on the concept of strong even cycles, and a finer upper bound based on the concept of dominating sets. Both of these concepts address the capacity of a cycle to have multiple steady states.
Our results represent generalizations of the results on the number of fixed points presented in~\cite{https://doi.org/10.48550/arxiv.1211.5633}.

BNs have recently been connected to two formalisms in the field of symbolic AI, namely finite ground normal logic programs (see~\cite{trinh2024graphical}) and abstract argumentation frameworks (see~\cite{DBLP:conf/comma/DimopoulosD024,trinh2025graphical}).
The existing results on the number of fixed points of a BN have been applied to explore upper bounds for the number of stable models of a finite ground normal logic program or the number of stable extensions of an abstract argumentation framework. There are similar connections between asynchronous attractors and the number of regular models of a finite ground normal logic program as well as the number of preferred extensions of an abstract argumentation framework.
The encoded BN of an abstract argumentation framework is an AND-NOT BN~\cite{DBLP:conf/comma/DimopoulosD024,trinh2025graphical} and the encoded BN of a uni-rule finite ground normal logic program (an important subclass of normal logic programs~\cite{DBLP:conf/lpnmr/SeitzerS97}) is an AND-NOT BN~\cite{trinh2024graphical}.
This demonstrates the broader implications beyond systems biology of the results we obtain in this work.

The remainder of this paper is organized as follows. 
\Cref{sec:preliminaries} reviews relevant background on Boolean networks. 
\Cref{sec:related-work} briefly discusses key contributions in the area of relating network structure and dynamics.
\Cref{sec:strong-even-cycles-async-atts} and~\Cref{sec:domi-sets-async-atts} present the main results on attractor bounds.
Finally,~\Cref{sec:conclusion} concludes the paper and outlines future research directions.

\section{Preliminaries}\label{sec:preliminaries}

We use \(\twod{}\) to denote the Boolean domain \(\{0, 1\}\).
The Boolean connectives used in this paper include \(\land\) (AND) and \(\neg\) (NOT).

\subsection{Boolean networks}\label{subsec:pre-BN}

\begin{definition}\label{def:BN}
A Boolean Network (BN) \(f\) is a finite set of Boolean functions over a finite set of Boolean variables denoted by \(\var{f}\) with \(|f| = |\var{f}|\).
A variable \(v \in \var{f}\) is associated with a Boolean function \(f_v\), called the update function of \(v\).
Given a Boolean function \(f_v \in f\), \(\IN{f_v}\) denotes the set of input variables in \(f_v\).
We assume that \(\IN{f_v}\) only contains the variables that appear in \(f_v\) and essentially affect the evaluation of \(f_v\).
Then the signature of \(f_v\) is \(f_v \colon \twod{}^{\IN{f_v}} \to \twod{}\).
Function \(f_v\) is called \emph{constant} if either \(f_v = 0\) or \(f_v = 1\).
Variable \(v \in \var{f}\) is called a \emph{source variable} if \(f_v = v\).
\end{definition}

\begin{definition}\label{def:AND-NOT-BN}
A BN \(f\) is called an \emph{AND-NOT} BN if for every \(v \in \var{f}\), \(f_v\) is either a constant or a conjunction of literals (i.e., variables or their negations connected by logical-and rules).
We assume that \(f_v\) does not contain two literals of the same variable.
\end{definition}

\subsection{Dynamics and attractors}

\begin{definition}\label{def:BN-state}
Given a BN \(f\),
a state \(s\) of \(f\) is a Boolean vector \(s \in \mathbb{B}^{|\var{f}|}\).
State \(s\) can be seen as a mapping, \(s \colon \var{f} \to \mathbb{B}\).
We write \(s_v\) or \(s(v)\) to denote the value of variable \(v\) in \(s\).
For convenience, we write a state simply as a string of values of variables in this state (e.g., we write 0110 instead of (0, 1, 1, 0)).
\end{definition}

Given a BN \(f\),
an update scheme specifies the way the variables in \(f\) update their states at each time step.
There are two major types of update schemes~\cite{schwab2020concepts}: synchronous (in which all variables update simultaneously) and asynchronous (in which a single variable is non-deterministically chosen for updating).
In the present work, we focus on the asynchronous update scheme.
Under this update scheme, the dynamics of the BN are represented by a directed graph, called the \emph{asynchronous state transition graph}, and defined formally in~\Cref{def:asynchronous-stg}.

\begin{definition}
Given a directed graph \(G\), \(V(G)\) (resp.\ \(E(G)\)) denotes the set of vertices (resp.\ arcs) of \(G\).
\(G[B]\) denotes the induced graph w.r.t. \(B \subseteq V(G)\) of \(G\).
\end{definition}

\begin{definition}\label{def:asynchronous-stg}
Given a BN \(f\),
the \emph{asynchronous state transition graph} of \(f\) (denoted by \(\astg{f}\)) is given as follows.
\(V(\astg{f}) = \twod{}^{|\var{f}|}\) and \((x, y) \in E(\astg{f})\) iff there exists a variable \(v_i \in \var{f}\) such that \(y(v_i) = f_i(x) \neq x(v_i)\) and \(y(v_j) = x(v_j)\) for all \(v_j \not= v_i\).
\end{definition}

An \emph{attractor} of a BN is defined as a subset-minimal \emph{trap set}, which depends on the employed update scheme.
Equivalently, an attractor is a terminal strongly connected component of the state transition graph corresponding to the employed update scheme~\cite{DBLP:journals/dam/Richard09}. 
To ease the statement of our results, we define the concepts of trap set and attractor for directed graphs in general.

\begin{definition}\label{def:digraph-attractor}
Given a directed graph \(G\),
a trap set \(A\) of \(G\) is a non-empty subset of \(V(G)\) such that there is no arc in \(G\) going out of \(A\) (formally, there do not exist two vertices \(x \in A\) and \(y \in V(G) \setminus A\) such that \((x, y) \in E(G)\)).
An \emph{attractor} of \(G\) is defined as a \(\subseteq\)-minimal trap set of \(G\).
An attractor is called a \emph{fixed point} if it consists of only one vertex, and a \emph{cyclic attractor} otherwise.
Equivalently, \(A\) is an attractor of \(G\) iff \(A\) is a terminal strongly connected component of \(G\).
\end{definition}

\begin{definition}\label{def:BN-asynchronous-attractor}
Given a BN \(f\),
an asynchronous attractor of \(f\) is defined as an attractor of \(\astg{f}\).
\end{definition}

\begin{example}\label{exam:BN-dynamics-attractor}
Consider the AND-NOT BN \(f\) given as \(\var{f} = \{a, b, c\}\) and \(f_a = \neg b \land c\), \(f_b = \neg a \land \neg c\), and \(f_c = \neg a\).
The asynchronous state transition graph
\(\astg{f}\) is shown in~\Cref{fig:exam-BN-astg-gig}~(a).
The graph \(\astg{f}\) has a unique cyclic attractor: \{000, 010, 011, 001, 101, 100\}.
\end{example}

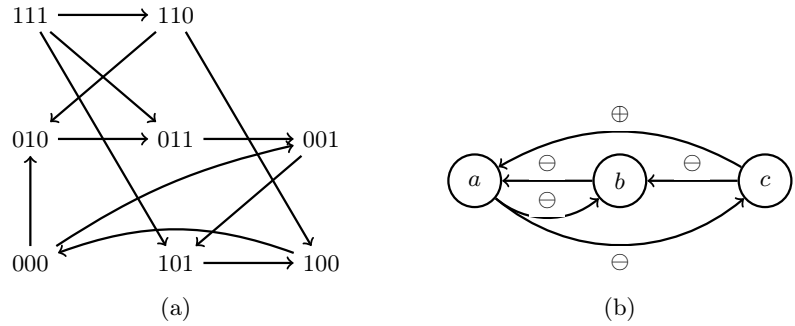
\begin{figure}
\centering
    \begin{subfigure}{.55\linewidth}
    \centering
        \begin{tikzpicture}[node distance=1.2cm and 1.2cm, every node/.style={scale=1.0}, line width = 0.3mm]
            \node[] (7) [] {111};
	    \node[] (6) [right=of 7] {110};
	    \node[] (2) [below=of 7] {010};
	    \node[] (3) [right=of 2] {011};
            \node[] (1) [right=of 3] {001};
            \node[] (0) [below=of 2] {000};
            \node[] (5) [right=of 0] {101};
            \node[] (4) [right=of 5] {100};

            \draw[->] (0) edge [] (2);
            \draw[->] (0) edge [bend left=10] (1);

            \draw[->] (1) edge [] (5);
            
            \draw[->] (2) edge [] (3);

            \draw[->] (3) edge [] (1);

            \draw[->] (4) edge [bend right=20] (0);

            \draw[->] (5) edge [] (4);
            
            \draw[->] (6) edge [] (2);
            \draw[->] (6) edge [] (4);

            \draw[->] (7) edge [] (6);
            \draw[->] (7) edge [] (5);
            \draw[->] (7) edge [] (3);
        \end{tikzpicture}
        \caption{}
    \end{subfigure}
    \begin{subfigure}{.4\linewidth}
    \centering
        \begin{tikzpicture}[node distance=1.2cm and 1.2cm, every node/.style={scale=1.0}, line width = 0.3mm]
            \node[circle, draw, minimum size=7mm] (a) [] {$a$};
            \node[circle, draw, minimum size=7mm] (b) [right=of a] {$b$};
            \node[circle, draw, minimum size=7mm] (c) [right=of b] {$c$};
		
            \draw[->] (a) edge [bend right=40] node [midway, below, fill=white] {$\ominus$} (c);
            \draw[->] (a) edge [bend right=40] node [midway, above, fill=white] {$\ominus$} (b);
            
            \draw[->] (b) edge [] node [midway, above, fill=white] {$\ominus$} (a);
            
            \draw[->] (c) edge [] node [midway, above, fill=white] {$\ominus$} (b);
            \draw[->] (c) edge [bend right=30] node [midway, above, fill=white] {$\oplus$} (a);
        \end{tikzpicture}
        \caption{}
    \end{subfigure}
    \caption{(a) asynchronous state transition graph and (b) global influence graph of the BN given in~\Cref{exam:BN-dynamics-attractor}}
\label{fig:exam-BN-astg-gig}
\end{figure}

\subsection{Influence graphs}

In this subsection we present the process of obtaining the influence graph of a BN, define different types of cycles in the influence graph, and introduce a network pattern via which a vertex connects to a cycle in an incoherent way.

Let \(x\) be a state of \(f\) and consider the binary value \(a \in \mathbb{B}\).
We use \(x[v \leftarrow a]\) to denote the state \(y\) such that \(y_v = a\) and \(y_u = x_u, \forall u \in \var{f}, u \neq v\). In other words, \(x[v \leftarrow a]\) has the value \(a\) substituted for the variable \(v\).

\begin{definition}\label{def:local-ig}
Given a BN \(f\) and a state \(x\) of \(f\),
the \emph{local influence graph} of \(f\) w.r.t. \(x\) (denoted by \(\igf{f}{x}\) and consisting of signed arcs) is defined as follows:
\begin{itemize}
    \item \((v_jv_i, \oplus)\) is an arc of \(\igf{f}{x}\) iff \(f_{v_i}(x[v_j \leftarrow 0]) < f_{v_i}(x[v_j \leftarrow 1])\)
    \item \((v_jv_i, \ominus)\) is an arc of \(\igf{f}{x}\) iff \(f_{v_i}(x[v_j \leftarrow 0]) > f_{v_i}(x[v_j \leftarrow 1])\)
\end{itemize}
\end{definition}

\begin{definition}\label{def:global-ig}
Given a BN \(f\),
the \emph{global influence graph} of \(f\) (denoted by \(\ig{f}\)) is defined as \(\ig{f} = \cup_{x \in \mathbb{B}^{|\var{f}|}}\igf{f}{x}\).
\end{definition}

\begin{definition}\label{def:digraph-even-odd-cycle}
Given a signed directed graph \(G\),
a cycle of \(G\) is called \emph{even} (resp.\ \emph{odd}) iff it contains an even (resp.\ odd) number of negative arcs.
An even (resp.\ odd) \emph{feedback vertex set} of \(G\) is a (possibly empty) subset of \(V(G)\) that intersects every even (resp.\ odd) cycle of \(G\).
\end{definition}

\begin{definition}[\cite{DBLP:journals/dam/RichardR13}]\label{def:delocalizing-triple}
    Consider a signed directed graph \(G\), a cycle \(C\) of \(G\), and vertices \(u\), \(v_1\), \(v_2\) of \(G\).
    \((u, v_1, v_2)\) is said to be a \emph{delocalizing triple} of \(C\) when 1) \(v_1\), \(v_2\) are distinct vertices of \(C\); 2) \((uv_1, \oplus)\) and \((uv_2, \ominus)\) are arcs of \(G\) but not arcs of \(C\).
    This delocalizing triple is \emph{internal} if \(u\) is a vertex of \(C\) and \emph{external} otherwise.
\end{definition}

\begin{example}\label{exam:BN-influence-graph-cycle}
Consider again the BN \(f\) given in~\Cref{exam:BN-dynamics-attractor}.
The global influence graph \(\ig{f}\) is shown in~\Cref{fig:exam-BN-astg-gig}~(b).
This graph has an even cycle \(C_1 = a \xrightarrow{\ominus} b \xrightarrow{\ominus} a\), an odd cycle \(C_2 = a \xrightarrow{\ominus} c \xrightarrow{\oplus} a\), and an odd cycle \(C_3 = a \xrightarrow{\ominus} c \xrightarrow{\ominus} b \xrightarrow{\ominus} a\).
Following~\Cref{def:delocalizing-triple}, \(C_1\) has an external delocalizing triple (namely \((c, a, b)\)), whereas \(C_2\) and \(C_3\) have no delocalizing triples.
\end{example}

\section{Related work}\label{sec:related-work}

The relationship between the structure of a BN (such as its global/local influence graph) and its dynamical behavior (such as its attractors) has been a central focus of research over the past decades. 
Below, we discuss key contributions in this area for general BNs and AND-NOT BNs in particular.

\subsection{General Boolean networks}

It was shown early on that a BN without cycles has a unique asynchronous attractor that coincides with a unique fixed point~\cite{Robert1980}.
Even and odd cycles in the global influence graph have been considered starting with two conjectures by Ren\'e Thomas~\cite{Thomas1981}.
Several results have been obtained regarding this direction: the fixed points coincide with the asynchronous attractors under the absence of odd cycles~\cite{DBLP:journals/aam/Richard10}, the number of fixed points is at most one under the absence of even cycles~\cite{Aracena2008}, the number of fixed points or asynchronous attractors is at most \(2^{|U|}\) where \(U\) is an even feedback vertex set of the global influence graph~\cite{Aracena2008,DBLP:journals/dam/Richard09}.
Other upper bounds for the number of fixed points have been explored based on the connections of BNs with Coding Theory~\cite{Gadouleau2015}.

By considering local influence graphs, several stronger results have been obtained.
The work by~\cite{DBLP:journals/aam/ShihD05} showed that if the local influence graph w.r.t. any state has no cycles, then the BN has a unique fixed point.
This result was generalized later in~\cite[Corollary 3]{DBLP:journals/tcs/Richard15}, showing that the BN still has a unique fixed point if, for any \(l\), there are fewer than \(2^l\) states \(x\) such that the local influence graph w.r.t. \(x\) has a cycle of length at most \(l\).
Inspired by this result, the work by~\cite{DBLP:journals/aam/RemyRT08} showed that if the local influence graph w.r.t. any state lacks even cycles, then the BN has a unique fixed point.
The absence of even cycles in every local influence graph also implies the uniqueness of asynchronous attractors~\cite{DBLP:journals/dam/RichardC07}.
It is then natural to ask the question for odd cycles: Is it true that if the local influence graph w.r.t. any state has no odd cycles, then the asynchronous attractors coincide with the fixed points?
However, several counterexamples using different techniques have shown that the answer to this question is `No'~\cite{DBLP:journals/dam/Ruet17,DBLP:journals/siamads/TonelloFC19}. 

\subsection{AND-NOT Boolean networks}

One interesting and useful direction is to seek similar or stronger results for subclasses of BNs~\cite{DBLP:journals/dam/RichardR13,https://doi.org/10.48550/arxiv.1211.5633,DBLP:journals/entcs/Veliz-CubaAL15,DBLP:journals/tcns/GaoCB18}.
Indeed, several results have been obtained for AND-NOT BNs. The fact that multiple inputs converging into a vertex are connected with AND in these networks allows the identification of influence patterns that take away a cycle's potential for multi-stability.
The work by~\cite{DBLP:journals/dam/RichardR13} showed that if the global influence graph has no even cycles without delocalizing triples, then the AND-NOT BN has at most one fixed point.
This article also showed that if every odd cycle of the global influence graph has an internal delocalizing triple, then the AND-NOT BN has at least one fixed point.
Although sharing the common result on even cycles with~\cite{DBLP:journals/dam/RichardR13}, the work by~\cite{https://doi.org/10.48550/arxiv.1211.5633} went further by relating the number of fixed points to delocalizing triples, then to dominating sets.
The results of~\cite{https://doi.org/10.48550/arxiv.1211.5633} are stronger than the respective result for fixed points based on even feedback vertex sets~\cite{Aracena2008}.

\section{Relating the number of asynchronous attractors to strong even cycles}\label{sec:strong-even-cycles-async-atts}

The number of fixed points in AND-NOT BNs has been studied in~\cite{https://doi.org/10.48550/arxiv.1211.5633}.

\begin{theorem}[Theorem 3.5 of~\cite{https://doi.org/10.48550/arxiv.1211.5633}]\label{theo:AND-NOT-BN-num-fix-cycle-delo-triple}
    Given an AND-NOT BN \(f\),
    let \(U\) be a subset of \(\var{f}\) that intersects every delocalizing-triple-free even cycle of \(\ig{f}\).
    Then the number of fixed points of \(f\) is at most \(2^{|U|}\).
\end{theorem}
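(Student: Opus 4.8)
The plan is to sort the fixed points of $f$ by their restriction to $U$: for each assignment $\alpha \colon U \to \twod{}$ I would show that at most one fixed point of $f$ restricts to $\alpha$ on $U$, which — since there are only $2^{|U|}$ such assignments — yields the bound. So suppose, towards a contradiction, that $x$ and $y$ are distinct fixed points of $f$ that agree on $U$, and put $D = \{v \in \var{f} : x_v \neq y_v\}$. Then $D \neq \emptyset$ and $D \cap U = \emptyset$, so the goal reduces to exhibiting an even cycle of $\ig{f}$ that has no delocalizing triple and lies entirely inside $D$: such a cycle avoids $U$ and contradicts the hypothesis on $U$.

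First I would localize the problem to $D$. Observe that $f_v$ is non-constant for every $v \in D$, since $f_v(x) = x_v \neq y_v = f_v(y)$. Define the AND-NOT BN $f^D$ on the variable set $D$ by substituting into each $f_v$, $v \in D$, the common value $x_w = y_w$ for every $w \in \var{f}\setminus D$; substituting constants into a conjunction of literals again produces a conjunction of literals (or a constant, which does not occur here since $f^D_v$ takes the value $x_v$ at the restriction of $x$ to $D$ and the value $y_v$ at the restriction of $y$). Hence $f^D$ is an AND-NOT BN in the sense of \Cref{def:AND-NOT-BN}, the restrictions of $x$ and $y$ to $D$ are two distinct fixed points of $f^D$, every literal of $f_v$ on a variable of $D$ survives and remains essential, and therefore $\ig{f^D}$ is exactly the subgraph $\ig{f}[D]$ of $\ig{f}$ induced on $D$. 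Thus it is enough to (a) show that an AND-NOT BN with at least two fixed points has a delocalizing-triple-free even cycle in its global influence graph, and (b) transfer this property back from $\ig{f^D}$ to $\ig{f}$.

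The transfer in (b) rests on the following fact, which I would prove using \Cref{def:AND-NOT-BN} and \Cref{def:delocalizing-triple}: if $(u, v_1, v_2)$ is a delocalizing triple of a cycle all of whose vertices lie in $D$, then $u \in D$. Indeed, $(uv_1, \oplus) \in \ig{f}$ forces $u$ to occur as a positive literal in $f_{v_1}$ and $(uv_2, \ominus) \in \ig{f}$ forces $u$ to occur as a negated literal in $f_{v_2}$; combining this with the identities $f_{v_i}(x) = x_{v_i}$, $f_{v_i}(y) = y_{v_i}$ and with $x_{v_i} \neq y_{v_i}$ pins down the value of $u$ in $x$ and in $y$ and shows that these values must differ (the configurations in which they would coincide turn out to be inconsistent). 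Consequently a delocalizing triple with respect to $\ig{f}$ of a cycle inside $D$ is again a delocalizing triple with respect to $\ig{f}[D] = \ig{f^D}$; equivalently, an even cycle of $\ig{f^D}$ without a delocalizing triple in $\ig{f^D}$ has none in $\ig{f}$ either, which (together with $V(C)\subseteq D$) is exactly the object we want. Part (a) is the case $U = \emptyset$ and can be quoted from the even-cycle result of~\cite{DBLP:journals/dam/RichardR13}, or reproved: label each $v \in D$ by the sign of $x_v - y_v$, use the AND-NOT form to find for each $v$ an in-neighbour inside $D$ whose arc is \emph{coherent} with the labels (positive between equal labels, negative between opposite ones), follow these arcs to extract a cycle inside $D$, and check that coherence around a closed walk forces an even number of negative arcs; then take a shortest such coherent even cycle and argue that it has no delocalizing triple.

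I expect this last step — showing a shortest coherent even cycle inside $D$ is delocalizing-triple-free — to be the main obstacle. Given a putative delocalizing triple $(u, v_1, v_2)$ of such a cycle $C$, the fact above already places $u$ in $D$, and a short computation shows that $v_1$ and $v_2$ carry opposite labels and that both arcs $u \to v_1$ and $u \to v_2$ are coherent; the difficulty is to convert this into a strictly shorter coherent even cycle by rerouting $C$ through $u$, treating the internal and external cases of \Cref{def:delocalizing-triple} separately and checking that parity, coherence, and containment in $D$ are all preserved. Once this is done, $C$ provides the required delocalizing-triple-free even cycle avoiding $U$, contradicting the hypothesis and forcing the number of fixed points of $f$ to be at most $2^{|U|}$. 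Everything outside the rerouting step is routine bookkeeping with the conjunctive structure of AND-NOT update functions.
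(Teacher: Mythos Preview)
The paper does not prove this theorem: it is quoted from~\cite{https://doi.org/10.48550/arxiv.1211.5633} and then superseded by the paper's own \Cref{theo:AND-NOT-BN-num-async-att-strong-even-cycle}, which bounds asynchronous attractors rather than fixed points. So the relevant comparison is with the proof of that stronger result. There the authors replace every $v\in U$ by a source variable to obtain $g$, observe that $\astg{g}$ is a spanning subgraph of $\astg{f}$ (so attractors can only increase, \Cref{prop:att-sub-graph}), and then for each Boolean assignment $\alpha$ on $U$ check that the resulting AND-NOT network $g^{\alpha}$ has no strong even cycle in its influence graph, whence it has a unique attractor by \Cref{lem:AND-NOT-BN-no-even-cycle-unique-async-att}. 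Specialised to fixed points this says: a fixed point of $f$ agreeing with $\alpha$ on $U$ is a fixed point of $g^{\alpha}$, and $g^{\alpha}$ has at most one.

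Your route is different but correct in outline. Instead of substituting at $U$, you substitute at the complement of the difference set $D$ of two hypothetical fixed points and work inside $\ig{f}[D]=\ig{f^{D}}$. Your transfer step~(b) is sound: the argument that the apex $u$ of a delocalizing triple of a cycle in $D$ must itself lie in $D$ is exactly right, and since $\ig{f^{D}}=\ig{f}[D]$ the triple persists in $\ig{f^{D}}$. Citing~\cite{DBLP:journals/dam/RichardR13} for step~(a) then finishes the proof. What each approach buys: the paper's source-variable construction plus \Cref{prop:att-sub-graph} is what makes the bound extend from fixed points to arbitrary asynchronous attractors, which your difference-set argument cannot do (there is no ``difference set'' of two cyclic attractors). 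Conversely, your argument is self-contained at the influence-graph level and avoids any detour through the state transition graph.

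One genuine gap: your optional reproof of~(a) via a \emph{shortest} coherent even cycle does not handle the external case. If $(u,v_{1},v_{2})$ is external, you cannot ``reroute $C$ through $u$'' because you have arcs $u\to v_{1}$ and $u\to v_{2}$ but no path from $C$ back to $u$; there is no strictly shorter cycle to produce. Either keep the citation to~\cite{DBLP:journals/dam/RichardR13}, or use the paper's machinery instead: $f^{D}$ has two fixed points, so by the contrapositive of the local-even-cycle uniqueness result (see \Cref{theo:BN-no-even-cycle-local-ig-unique-async-att} or~\cite{DBLP:journals/aam/RemyRT08}) some $\igf{f^{D}}{z}$ contains an even cycle, and by \Cref{prop:AND-NOT-BN-local-cycle-strong} any local cycle of an AND-NOT network is strong in $\ig{f^{D}}$. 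That gives~(a) without any rerouting.
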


Inspired by this result and the fact that a fixed point is a special asynchronous attractor, we here address the number of asynchronous attractors of an AND-NOT BN.
\Cref{theo:AND-NOT-BN-num-async-att-strong-even-cycle} bounds the number of asynchronous attractors using the subset of \(\var{f}\) that intersects every strong (i.e., delocalizing-triple-free) even cycle of \(\ig{f}\). 
We build up to~\Cref{theo:AND-NOT-BN-num-async-att-strong-even-cycle} by showing that an AND-NOT BN whose global influence graph has no strong cycles has a unique asynchronous attractor.

\begin{definition}\label{def:strong-cycle}
    Given a signed directed graph \(G\), a cycle \(C\) of \(G\).
    The cycle \(C\) is \emph{strong} if it has no delocalizing triples.
\end{definition}

\begin{definition}[\cite{DBLP:journals/dam/RichardR13}]\label{def:local-cycle-ig}
    Given a BN \(f\),
    a cycle of \(\ig{f}\) is called \emph{local} iff there is a state \(x\) such that the cycle also exists in the local influence graph of \(x\), \(\igf{f}{x}\).
\end{definition}

\begin{proposition}\label{prop:AND-NOT-BN-local-cycle-strong}
    Let \(f\) be an AND-NOT BN.
    If \(C\) is a local cycle of \(\ig{f}\), then it must be strong in \(\ig{f}\).
\end{proposition}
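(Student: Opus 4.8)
The plan is to exploit the defining feature of AND-NOT update functions. First I would record a local-influence-graph characterization for a conjunction: if \(f_{v_i}\) is non-constant (a conjunction of literals with no repeated variable), then for any state \(x\) and any \(v_j \in \IN{f_{v_i}}\), the arc \((v_j v_i, *)\) lies in \(\igf{f}{x}\) if and only if every literal of \(f_{v_i}\) other than the one built on \(v_j\) evaluates to \(1\) at \(x\); moreover the sign \(*\) is then forced, being \(\oplus\) exactly when \(v_j\) occurs positively in \(f_{v_i}\) and \(\ominus\) exactly when it occurs negatively. This is immediate from comparing \(f_{v_i}(x[v_j \leftarrow 0])\) with \(f_{v_i}(x[v_j \leftarrow 1])\). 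In particular every arc of \(\ig{f}\) carries a unique sign, determined by the polarity of the corresponding literal.

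Next I would argue by contradiction. Let \(x\) be a state witnessing that \(C\) is local, so that every arc of \(C\) belongs to \(\igf{f}{x}\), and suppose \((u, v_1, v_2)\) were a delocalizing triple of \(C\). Since \((u v_1, \oplus)\) and \((u v_2, \ominus)\) are arcs of \(\ig{f}\), the functions \(f_{v_1}\) and \(f_{v_2}\) are non-constant, \(u \in \IN{f_{v_1}} \cap \IN{f_{v_2}}\), the literal on \(u\) in \(f_{v_1}\) is positive, and the literal on \(u\) in \(f_{v_2}\) is negative. Let \(w_1\) and \(w_2\) be the predecessors of \(v_1\) and \(v_2\) along \(C\), so that the arc of \(C\) entering \(v_1\) is \((w_1 v_1, \cdot)\) and the arc entering \(v_2\) is \((w_2 v_2, \cdot)\), both lying in \(\igf{f}{x}\).

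The key small step is to observe that \(u \neq w_1\) and \(u \neq w_2\). Indeed, if \(u = w_1\), then by uniqueness of arc signs the arc of \(C\) entering \(v_1\) would have to coincide with \((u v_1, \oplus)\), contradicting that \((u v_1, \oplus)\) is not an arc of \(C\); the case \(u = w_2\) is symmetric. Applying the characterization from the first step to \((w_1 v_1, \cdot) \in \igf{f}{x}\), every literal of \(f_{v_1}\) other than the one on \(w_1\) evaluates to \(1\) at \(x\); since \(u \neq w_1\) and the literal on \(u\) is \(x_u\), this gives \(x_u = 1\). The symmetric computation at \(v_2\), using that the literal on \(u\) in \(f_{v_2}\) is \(\neg x_u\), gives \(x_u = 0\), a contradiction. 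Hence \(C\) has no delocalizing triple, i.e., \(C\) is strong.

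I expect the only delicate point to be the bookkeeping in that key step — carefully excluding the possibility that \(u\) is the \(C\)-predecessor of \(v_1\) or of \(v_2\), including degenerate configurations such as a length-one cycle or \(u\) itself lying on \(C\) — which is exactly where the uniqueness of edge signs in AND-NOT networks, combined with the requirement that the two delocalizing arcs are not arcs of \(C\), is used. Everything else is a direct unwinding of~\Cref{def:local-ig}, \Cref{def:local-cycle-ig}, and~\Cref{def:delocalizing-triple} for conjunctive functions.
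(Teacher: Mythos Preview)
Your proof is correct and follows essentially the same approach as the paper's: both argue by contradiction from a hypothetical delocalizing triple \((u,v_1,v_2)\), use the conjunctive form of \(f_{v_1}\) and \(f_{v_2}\) to force constraints on \(x_u\), and obtain \(x_u=1\) and \(x_u=0\) simultaneously. The only cosmetic difference is that the paper packages this as a case split on the value of \(x_u\) (showing in each case that the in-arc of \(C\) at \(v_1\) or \(v_2\) cannot survive in \(\igf{f}{x}\)), whereas you derive the two incompatible values directly via your preliminary characterization of local arcs for conjunctions.
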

\begin{proof}
    Assume that \(C\) is not strong in \(\ig{f}\).
    Then there exist three vertices, \(v_i, v_j \in V(C), v_i \neq v_j\), as well as \(v_k \in V(\ig{f})\), such that \(v_k \xrightarrow{\oplus} v_i\) and \(v_k \xrightarrow{\ominus} v_j\) are arcs of \(\ig{f}\) but not arcs of \(C\).
    By the definition of AND-NOT BNs, this property implies that \(f_{v_i} = v_k \land \dots\) and \(f_{v_j} = \neg v_k \land \dots\).
    
    Since \(C\) is local, there is a state \(x\) such that \(C\) is a cycle of \(\igf{f}{x}\).
    We have two cases as follows.
    \textbf{Case 1}: \(x_{v_k} = 0\).
    Then \(f_{v_i}(x[u \leftarrow 0]) = f_{v_i}(x[u \leftarrow 1]) = 0\) for all \(u \in \IN{f_{v_i}} \setminus \{v_k\}\).
    Since \(f_{v_i}\) is a constant or a conjunction of literals, there is at most one arc from \(v_k\) to \(v_i\).
    Hence, if \(\igf{f}{x}\) has an arc ending at \(v_i\), this arc must start from \(v_k\), and must be \((v_kv_i, \oplus)\).
    Since \((v_kv_i, \oplus)\) is not an arc of \(C\) and \(\igf{f}{x}\) has no other arc ending at \(v_i\), no arc of \(C\) ends at \(v_i\), which is a contradiction because \(v_i\) is a vertex of \(C\).
    It is easy to see that the condition that \(v_k \xrightarrow{\oplus} v_i\) and \(v_k \xrightarrow{\ominus} v_j\) are not arcs of \(C\) is necessary for the proof.
    \textbf{Case 2}: \(x_{v_k} = 1\).
    Then \(f_{v_j}(x[u \leftarrow 0]) = f_{v_j}(x[u \leftarrow 1]) = 0\) for all \(u \in \IN{f_{v_j}} \setminus \{v_k\}\).
    Since \(f_{v_j}\) is a constant or a conjunction of literals, there is at most one arc from \(v_k\) to \(v_j\).
    Hence, if \(\igf{f}{x}\) has an arc ending at \(v_j\), this arc must start from \(v_k\), and must be \((v_kv_j, \ominus)\).
    Since \((v_kv_j, \ominus)\) is not an arc of \(C\) and \(\igf{f}{x}\) has no other arc ending at \(v_j\), no arc of \(C\) ends at \(v_j\), which is a contradiction because \(v_j\) is a vertex of \(C\).
    
    Both cases lead to a contradiction.
    Hence, \(C\) must be strong in \(\ig{f}\).\qed%
\end{proof}

\begin{theorem}[Corollary 1 of~\cite{DBLP:journals/dam/RichardC07}]\label{theo:BN-no-even-cycle-local-ig-unique-async-att}
    Consider a BN \(f\) and suppose that \(\igf{f}{x}\) has no even cycles for every \(x \in \twod{}^{|\var{f}|}\).
    Then \(\astg{f}\) has a unique attractor.
\end{theorem}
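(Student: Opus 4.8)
\textbf{Step 1: reduction to confluence.} The plan is to show it suffices to prove that $\astg{f}$ is \emph{confluent}: for any two states $a,b\in\twod{}^{|\var{f}|}$ there is a state $c$ reachable from both $a$ and $b$ in $\astg{f}$. Indeed, an attractor is a terminal strongly connected component, so if $a$ lies in an attractor $A$ then every state reachable from $a$ also lies in $A$, and likewise for $b$ and its attractor $B$; confluence then puts $c\in A\cap B$, forcing $A=B$ since distinct attractors are disjoint. As $\astg{f}$ is finite and every state reaches some terminal strongly connected component, this gives a unique attractor.

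\textbf{Step 2: a minimal bad pair.} Call $(a,b)$ \emph{bad} if no state is reachable from both $a$ and $b$ in $\astg{f}$; this notion is symmetric in $a,b$, and if $(a,b)$ is bad and $a\to a'$ in $\astg{f}$ then $(a',b)$ is still bad, since any common descendant of $a'$ and $b$ would also be one of $a$ and $b$. Suppose, for contradiction, that a bad pair exists, and choose one, $(a,b)$, minimizing the Hamming distance $d(a,b)$ over all bad pairs; put $\Delta=\{v\in\var{f}:a_v\neq b_v\}$, which is nonempty. If the update of some $v\in\Delta$ were enabled at $a$, i.e.\ $f_v(a)\neq a_v$, then $f_v(a)=b_v$, so the transition $a\to a[v\leftarrow f_v(a)]$ would produce a state still bad with $b$ but at Hamming distance $d(a,b)-1$, contradicting minimality. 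Hence $f_v(a)=a_v$ and, symmetrically, $f_v(b)=b_v$ for every $v\in\Delta$; in particular $f_v(a)\neq f_v(b)$ for all $v\in\Delta$, while $a$ and $b$ agree outside $\Delta$.

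\textbf{Step 3: producing an even local cycle.} It remains to contradict the hypothesis by exhibiting a state $x$ for which $\igf{f}{x}$ has an even cycle. The natural route is to walk from $a$ to $b$ inside the subcube spanned by $\Delta$ (the coordinates outside $\Delta$ frozen), flipping one coordinate of $\Delta$ at a time. Since $f_v$ takes different values at the two endpoints for each $v\in\Delta$, along this walk some single flip --- say of a coordinate $w(v)\in\Delta$ --- changes the value of $f_v$, and the two intermediate states straddling that flip (which differ only in coordinate $w(v)$) witness an arc $w(v)\to v$ in their common local influence graph. Iterating the resulting map $w\colon\Delta\to\Delta$ yields, by finiteness, a cycle supported on $\Delta$ in $\ig{f}$. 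The main obstacle --- and the technical heart of the statement --- is to upgrade this to a cycle that (i) lies entirely in a \emph{single} local influence graph $\igf{f}{x}$ and (ii) has an even number of negative arcs; both points require choosing $a$ and $b$ with extremal properties stronger than mere Hamming-minimality and carefully tracking arc signs along the flip walk, and this is where I would follow the argument of~\cite{DBLP:journals/dam/RichardC07}. The even local cycle so obtained contradicts the hypothesis, so no bad pair exists, $\astg{f}$ is confluent, and hence it has a unique attractor.\qed
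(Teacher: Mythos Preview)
The paper does not prove this theorem at all: it is quoted verbatim as ``Corollary~1 of~\cite{DBLP:journals/dam/RichardC07}'' and used as a black box in the proof of Lemma~\ref{lem:AND-NOT-BN-no-even-cycle-unique-async-att}. There is therefore no ``paper's own proof'' to compare your attempt against.

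As for your proposal itself: Steps~1 and~2 are correct and are indeed the opening moves of the Richard--Comet argument (reduce to confluence, take a Hamming-minimal bad pair, observe that both endpoints are fixed on the differing coordinates). Step~3, however, is not a proof but an acknowledgement of where the work lies: you state that one must produce an \emph{even} cycle in a \emph{single} local influence graph, note that this requires a more refined extremal choice than Hamming-minimality, and then explicitly defer to~\cite{DBLP:journals/dam/RichardC07} for that part. That is exactly the technical heart of the theorem, so what you have written is a roadmap rather than a self-contained proof. Since the paper also simply cites the result, your write-up is no less complete than the paper's treatment; but if the intent was to supply an independent proof, the gap is real and lives precisely where you flagged it.
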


\begin{lemma}\label{lem:AND-NOT-BN-no-even-cycle-unique-async-att}
    Given an AND-NOT BN \(f\),
    if \(\ig{f}\) lacks strong even cycles, then \(\astg{f}\) has a unique attractor.
\end{lemma}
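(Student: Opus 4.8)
The plan is to derive the statement from the local‑influence‑graph criterion \Cref{theo:BN-no-even-cycle-local-ig-unique-async-att} by using \Cref{prop:AND-NOT-BN-local-cycle-strong} contrapositively. It suffices to show that, under the hypothesis that \(\ig{f}\) has no strong even cycle, the local influence graph \(\igf{f}{x}\) has no even cycle for any state \(x\); the conclusion then follows immediately from \Cref{theo:BN-no-even-cycle-local-ig-unique-async-att}. So I would argue by contradiction: suppose \(\ig{f}\) lacks strong even cycles but there is a state \(x\) such that \(\igf{f}{x}\) contains an even cycle \(C\).

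The first step is to transfer \(C\) from the local graph to the global graph. By \Cref{def:global-ig}, \(\ig{f}\) is the union of all local influence graphs, so every signed arc of \(\igf{f}{x}\) is a signed arc of \(\ig{f}\) with the same sign. Hence the same sequence of signed arcs that forms \(C\) in \(\igf{f}{x}\) also forms a cycle in \(\ig{f}\), and the number of negative arcs is unchanged; in particular \(C\) is an even cycle of \(\ig{f}\). The second step invokes \Cref{def:local-cycle-ig}: since \(C\) is a cycle of \(\ig{f}\) that also exists in \(\igf{f}{x}\), \(C\) is a \emph{local} cycle of \(\ig{f}\). By \Cref{prop:AND-NOT-BN-local-cycle-strong}, \(C\) must then be strong in \(\ig{f}\). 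Combining the two steps, \(C\) is a strong even cycle of \(\ig{f}\), contradicting the hypothesis. Therefore no local influence graph of \(f\) has an even cycle, and \Cref{theo:BN-no-even-cycle-local-ig-unique-async-att} gives that \(\astg{f}\) has a unique attractor.

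\textbf{Main obstacle.} There is no substantial difficulty: the two genuinely nontrivial ingredients—\Cref{prop:AND-NOT-BN-local-cycle-strong} (local cycles of AND‑NOT BNs are strong) and \Cref{theo:BN-no-even-cycle-local-ig-unique-async-att} (absence of local even cycles forces a unique asynchronous attractor)—are already available. The only point needing a line of care is the observation that passing from \(\igf{f}{x}\) to \(\ig{f}\) preserves both the cyclicity and the parity of \(C\), which is immediate from the union definition of \(\ig{f}\); everything else is bookkeeping.
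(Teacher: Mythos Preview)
Your proposal is correct and essentially matches the paper's own proof: both derive the lemma from \Cref{theo:BN-no-even-cycle-local-ig-unique-async-att} by observing that any even cycle in a local influence graph \(\igf{f}{x}\) is, via the subgraph inclusion \(\igf{f}{x}\subseteq\ig{f}\), a local cycle of \(\ig{f}\) and hence strong by \Cref{prop:AND-NOT-BN-local-cycle-strong}, contradicting the hypothesis. The only cosmetic difference is that the paper opens by assuming \(\astg{f}\) has two attractors and then invokes the contrapositive of \Cref{theo:BN-no-even-cycle-local-ig-unique-async-att}, whereas you verify its hypothesis directly.
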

\begin{proof}
    Assume that \(\astg{f}\) has two attractors.
    By~\Cref{theo:BN-no-even-cycle-local-ig-unique-async-att}, there must be a state \(x\) such that \(\igf{f}{x}\) has an even cycle (say \(C\)).
    Then \(C\) is a local cycle of \(\ig{f}\) since \(\igf{f}{x}\) is a sub-graph of \(\ig{f}\).
    By~\Cref{prop:AND-NOT-BN-local-cycle-strong}, \(C\) must be strong, which is a contradiction.
    Hence, \(\astg{f}\) has at most one attractor.
    Since \(\astg{f}\) has at least one attractor, it has a unique attractor. \qed%
\end{proof}

We now introduce a mapping between the attractors of a BN and its variants in which certain variables are made into source variables. 
This mapping will be used in the proof of~\Cref{theo:AND-NOT-BN-num-async-att-strong-even-cycle}.

\begin{proposition}\label{prop:att-sub-graph}
    Given a directed graph \(G\).
    Let \(G'\) be a sub-graph of \(G\) such that \(G\) and \(G'\) have the same set of vertices.
    There is an injection (one-to-one mapping) from the set of attractors of \(G\) to that of \(G'\).
    Consequently, the number of attractors of \(G\) is less than or equal to that of \(G'\).
\end{proposition}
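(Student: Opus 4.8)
The plan is to exploit two facts: passing to a spanning subgraph can only delete arcs, so every trap set of $G$ remains a trap set of $G'$; and distinct attractors of a directed graph, being its terminal strongly connected components, are pairwise vertex-disjoint. Together these let me send each attractor of $G$ to an attractor of $G'$ nested inside it, and disjointness then forces the map to be injective.

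Concretely, I would first record that since $V(G') = V(G)$ and $E(G') \subseteq E(G)$, any $A \subseteq V(G)$ from which no arc of $G$ exits also admits no exiting arc of $G'$; hence every trap set of $G$ is a trap set of $G'$. Next I would observe that any nonempty trap set of a finite directed graph contains a $\subseteq$-minimal trap set: the family of trap sets contained in it is finite and nonempty, so it has a $\subseteq$-minimal member, and that member is in fact $\subseteq$-minimal among \emph{all} trap sets, since a trap set properly contained in it would also lie in the original set. Thus, for every attractor $A$ of $G$ there is at least one attractor of $G'$ contained in $A$; fix a choice of one such attractor and call it $\phi(A)$.

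Finally, to see that $\phi$ is injective I would use that two distinct attractors of $G$ are disjoint: by \Cref{def:digraph-attractor} they are distinct terminal strongly connected components, and the strongly connected components of a directed graph partition its vertex set. So if $A_1 \neq A_2$ are attractors of $G$, then $\phi(A_1) \subseteq A_1$ and $\phi(A_2) \subseteq A_2$ are nonempty and disjoint, whence $\phi(A_1) \neq \phi(A_2)$. The inequality between the numbers of attractors is then immediate from the existence of an injection between these (finite) sets.

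I do not anticipate a real obstacle; the only point needing care is the claim that a $\subseteq$-minimal element of the trap sets contained in $A$ is genuinely an attractor of $G'$, i.e. globally minimal rather than merely minimal within $A$ — this holds because being a trap set is not relative to any ambient set. One could alternatively phrase the whole argument through condensations: a $G'$-trap set corresponds to a downward-closed set of components in the condensation of $G'$, hence contains a sink component, which is precisely an attractor of $G'$.
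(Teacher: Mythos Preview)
Your proposal is correct and follows essentially the same approach as the paper: both arguments observe that every attractor of $G$ is a trap set of $G'$, select an attractor of $G'$ contained in it, and use pairwise disjointness of attractors in $G$ to conclude injectivity. Your treatment is slightly more careful than the paper's in justifying that a $\subseteq$-minimal trap set found inside $A$ is globally minimal, but the core idea is identical.
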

\begin{proof}
    Let \(\mathcal{A}(G)\) denote the set of attractors of a directed graph \(G\).
    We construct a mapping \(m \colon \mathcal{A}(G) \to \mathcal{A}(G')\) as follows.
    Consider attractor \(B \in \mathcal{A}(G)\).
    The corresponding induced sub-graphs in \(G\) and \(G'\) have the property that \(E(G'[B]) \subseteq E(G[B])\).
    If \(E(G'[B]) = E(G[B])\), then choose \(m(B) = B\).
    In the case that \(E(G'[B]) \subsetneq E(G[B])\), \(B\) may not be an attractor of \(G'\) but it is a trap set of \(G'\) because \(E(G') \subseteq E(G)\) ensures that the terminal nature of \(B\) is preserved.
    Then \(B\) contains at least one \(\subseteq\)-minimal trap set (equivalently an attractor) of \(G'\), say \(B'\).
    Choose \(m(B) = B'\).
    By construction, \(m(B) \subseteq B\) for any \(B \in \mathcal{A}(G)\).
    Consider \(B_1, B_2 \in \mathcal{A}(G)\) and \(B_1 \neq B_2\).
    We have \(m(B_1) \subseteq B_1\) and \(m(B_2) \subseteq B_2\).
    Since \(B_1 \cap B_2 = \emptyset\), \(m(B_1) \cap m(B_2) = \emptyset\).
    This implies that \(m\) is an injection.
    Consequently, \(|\mathcal{A}(G)| \leq |\mathcal{A}(G')|\). \qed%
\end{proof}

\begin{lemma}\label{lem:BN-one-source-node-sub-graph}
    Let \(f\) be a BN.
    Let \(g\) be the BN that differs from \(f\) in that it makes a single variable into a source variable, i.e., \(\var{f} = \var{g}\), \(g_{v_j} = f_{v_j}, \forall v_j \neq v_i\) and \(g_{v_i} = v_i\) where \(v_i\) is a given variable in \(\var{f}\).
    Then \(V(\astg{g}) = V(\astg{f})\) and \(E(\astg{g}) \subseteq E(\astg{f})\).
\end{lemma}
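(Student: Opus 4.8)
The plan is a direct verification against \Cref{def:asynchronous-stg}. Since $g$ is built so that $\var{g} = \var{f}$, the vertex sets coincide trivially: $V(\astg{g}) = \twod{}^{|\var{g}|} = \twod{}^{|\var{f}|} = V(\astg{f})$. So the only thing to establish is the edge inclusion $E(\astg{g}) \subseteq E(\astg{f})$, and I would do this by taking an arbitrary arc of $\astg{g}$ and checking it satisfies the defining condition for an arc of $\astg{f}$.

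Concretely, suppose $(x, y) \in E(\astg{g})$. By \Cref{def:asynchronous-stg} there is a variable $v_k$ with $y(v_k) = g_k(x) \neq x(v_k)$ and $y(v_j) = x(v_j)$ for all $v_j \neq v_k$. I would split on whether $v_k$ is the frozen variable $v_i$. In the case $v_k = v_i$, the identity $g_{v_i} = v_i$ forces $g_k(x) = x(v_i) = x(v_k)$, contradicting $g_k(x) \neq x(v_k)$; hence this case cannot occur (equivalently: no arc of $\astg{g}$ ever updates $v_i$). In the case $v_k \neq v_i$, we have $g_{v_k} = f_{v_k}$ as Boolean functions, so $g_k(x) = f_k(x)$, and therefore $y(v_k) = f_k(x) \neq x(v_k)$ together with $y(v_j) = x(v_j)$ for $v_j \neq v_k$ — which is exactly the condition placing $(x, y)$ in $E(\astg{f})$. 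Combining the two cases gives the desired inclusion.

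I do not expect a genuine obstacle here; the only point that calls for a moment's care is the bookkeeping around the essential-input sets $\IN{\cdot}$ demanded by \Cref{def:BN}, but since $f$ and $g$ agree on every update function except that of $v_i$, those functions are literally equal and hence have identical input sets, and the $v_i$ case is vacuous. Finally, I would note how the lemma is used: combined with \Cref{prop:att-sub-graph}, it shows that freezing a variable into a source variable can only (weakly) increase the number of asynchronous attractors, and iterating this over a chosen set of variables is the reduction step underlying the proof of \Cref{theo:AND-NOT-BN-num-async-att-strong-even-cycle}.
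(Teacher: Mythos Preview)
Your proposal is correct and follows essentially the same argument as the paper: verify the vertex sets agree from $\var{f}=\var{g}$, take an arbitrary arc of $\astg{g}$, split on whether the updated variable is $v_i$ (impossible since $g_{v_i}=v_i$) or not (in which case $g_{v_k}=f_{v_k}$ yields the same arc in $\astg{f}$). Your added remarks on $\IN{\cdot}$ and on the lemma's use via \Cref{prop:att-sub-graph} are accurate and not needed for the proof itself.
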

\begin{proof}
    Since \(\var{f} = \var{g}\), \(V(\astg{g}) = V(\astg{f})\).
    Consider an arc \((x, y) \in E(\astg{g})\).
    By definition, there is a variable \(v_j \in \var{g}\) such that \(y(v_j) = g_{v_j}(x) \neq x(v_j)\). In other words, the \((x, y)\) arc corresponds to the update of variable \(v_j\).
    We have two cases as follows.
    \textbf{Case 1}: \(v_j = v_i\). 
    This is impossible because in this case, \(g_{v_j}(x) = x(v_j)\) because \(v_j\) is a source variable of \(g\) by construction.
    \textbf{Case 2}: \(v_j \neq v_i\).
    Then \(y(v_j) = g_{v_j}(x) = f_{v_j}(x) \neq x(v_j)\).
    By definition, \((x, y) \in E(\astg{f})\). In other words, this update is shared by \(g\) and \(f\).
    Now we can conclude that \(E(\astg{g}) \subseteq E(\astg{f})\). \qed%
\end{proof}

We are now ready to prove the main result of this section.

\begin{theorem}\label{theo:AND-NOT-BN-num-async-att-strong-even-cycle}
    Consider an AND-NOT BN \(f\).
    Let \(U\) be a subset of \(\var{f}\) that intersects every strong even cycle of \(\ig{f}\).
    Then the number of attractors of \(\astg{f}\) is at most \(2^{|U|}\).
\end{theorem}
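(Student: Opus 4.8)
The plan is to reduce to the situation in which every variable of $U$ is a source variable, apply the already-established monotonicity of attractor counts under passing to a subgraph, and then decompose the resulting dynamics according to the $2^{|U|}$ frozen configurations of $U$, invoking \Cref{lem:AND-NOT-BN-no-even-cycle-unique-async-att} on each piece. Concretely, let $g$ be the BN obtained from $f$ by turning every $v_i \in U$ into a source variable ($g_{v_i} = v_i$ for $v_i \in U$, $g_{v} = f_v$ otherwise). Applying \Cref{lem:BN-one-source-node-sub-graph} once for each variable of $U$ yields $V(\astg{g}) = V(\astg{f})$ and $E(\astg{g}) \subseteq E(\astg{f})$, so by \Cref{prop:att-sub-graph} the number of attractors of $\astg{f}$ is at most that of $\astg{g}$. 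It therefore suffices to show that $\astg{g}$ has exactly $2^{|U|}$ attractors.

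Since the source variables in $U$ never change value along any arc of $\astg{g}$, the graph $\astg{g}$ is the disjoint union, over the $2^{|U|}$ assignments $p \in \twod{}^{U}$, of the subgraph induced by the states that agree with $p$ on $U$. For each $p$, this induced subgraph is isomorphic to $\astg{g^p}$, where $g^p$ is the BN on the variable set $\var{f} \setminus U$ whose update functions are the functions $f_v$ ($v \notin U$) with the values $p$ substituted for the variables of $U$. Because the attractors of a disjoint union of directed graphs are exactly the attractors of its components, it is enough to prove that each $g^p$ has a unique attractor. First note $g^p$ is again an AND-NOT BN: restricting a conjunction of literals either produces the constant $0$ (if some literal over $U$ is falsified by $p$) or again a conjunction of literals (the remaining literals not over $U$, possibly none, giving the constant $1$). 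By \Cref{lem:AND-NOT-BN-no-even-cycle-unique-async-att}, it then suffices to show that $\ig{g^p}$ has no strong even cycle.

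Suppose, for contradiction, that $C$ is a strong even cycle of $\ig{g^p}$. For AND-NOT BNs the arcs of an influence graph can be read directly from the formulas, and the substitution defining $g^p$ either makes an update function constant (removing all arcs into that vertex) or simply deletes the literals over $U$ (keeping the rest with their signs); hence $C$ is an even cycle of $\ig{f}$ all of whose vertices lie in $\var{f} \setminus U$, so $C$ is disjoint from $U$. Since $U$ intersects every strong even cycle of $\ig{f}$, the cycle $C$ is not strong in $\ig{f}$, so it admits a delocalizing triple $(u, v_1, v_2)$ (\Cref{def:delocalizing-triple}): $v_1, v_2$ are distinct vertices of $C$, and $(uv_1,\oplus)$, $(uv_2,\ominus)$ are arcs of $\ig{f}$ not in $C$, whence $f_{v_1} = u \land \cdots$ and $f_{v_2} = \lnot u \land \cdots$. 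If $u \in U$, then whichever value $p$ assigns to $u$ turns one of $f_{v_1}$, $f_{v_2}$ into the constant $0$ after restriction, so one of $v_1, v_2$ has no incoming arc in $\ig{g^p}$ and cannot lie on the cycle $C$ — a contradiction. Hence $u \notin U$; and since $v_1, v_2$ lie on a cycle of $\ig{g^p}$, the restricted functions $g^p_{v_1}, g^p_{v_2}$ are non-constant, so $u$ still occurs in them with the same signs, i.e., $(uv_1,\oplus)$ and $(uv_2,\ominus)$ are arcs of $\ig{g^p}$ that are not arcs of $C$. Thus $(u, v_1, v_2)$ is a delocalizing triple of $C$ in $\ig{g^p}$, contradicting the strength of $C$. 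Therefore $\ig{g^p}$ has no strong even cycle, $\astg{g^p}$ has a unique attractor, $\astg{g}$ has exactly $2^{|U|}$ attractors, and so $\astg{f}$ has at most $2^{|U|}$.

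I expect the third paragraph to be the crux: precisely tracking how the influence graph, and in particular delocalizing triples of even cycles, transform under the substitution that freezes $U$ — notably the bookkeeping needed when a restricted update function collapses to a constant, which is exactly what rules out the case $u \in U$. The reduction via \Cref{lem:BN-one-source-node-sub-graph} and \Cref{prop:att-sub-graph}, the disjoint-union decomposition, and the closure of the AND-NOT class under restriction are routine given the results already in hand.
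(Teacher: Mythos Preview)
Your proof is correct and follows essentially the same route as the paper: turn the variables of $U$ into sources, use \Cref{lem:BN-one-source-node-sub-graph} and \Cref{prop:att-sub-graph} to reduce to $g$, split $\astg{g}$ over the $2^{|U|}$ assignments, and apply \Cref{lem:AND-NOT-BN-no-even-cycle-unique-async-att} to each piece after checking that no strong even cycle survives.

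The one genuine (minor) difference is how you set up the restricted network. The paper keeps $\var{g^x}=\var{f}$ and makes each $v_i\in U$ a constant, so outgoing arcs from $U$ remain in $\ig{g^x}$ and any delocalizing triple $(u,v_1,v_2)$ with $u\in U$ simply persists; no extra case is needed. You instead substitute $p$ and drop $U$ from the variable set, which forces you to treat the case $u\in U$ separately---and you handle it correctly by observing that the substitution collapses one of $g^p_{v_1},g^p_{v_2}$ to $0$, destroying the cycle. Either bookkeeping works; the paper's choice avoids that case split at the cost of carrying the constant variables, while yours trims the network but must argue the extra case. Your anticipation that this ``$u\in U$'' bookkeeping is the crux is exactly right.
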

\begin{proof}
    Let \(g\) be the BN such that \(g_{v_i} = v_i\) for every \(v_i \in U\) and \(g_{v_i} = f_{v_i}\) for every \(v_i \in \var{f} \setminus U\).
    By applying~\Cref{lem:BN-one-source-node-sub-graph} sequentially over variables in \(U\), we obtain \(V(\astg{g}) = V(\astg{f})\) and \(E(\astg{g}) \subseteq E(\astg{f})\).
    
    By construction, \(\ig{g}\) breaks all strong even cycles in \(\ig{f}\) because \(U\) intersects every strong even cycle in \(\ig{f}\), but adds self-arcs (which are strong even cycles by definition) for variables in \(U\) (because \(g_{v_i} = v_i\) for every \(v_i \in U\)).
    Consider an even cycle \(C\) of \(\ig{f}\) that is not strong.
    Let \((t_C, v_1, v_2)\) be a delocalizing triple of \(C\).
    Assume that \(C\) becomes strong in \(\ig{g}\).
    This can happen if the arc from \(t_C\) to \(v_1\) or the arc from \(t_C\) to \(v_2\) is removed in the transformation from \(f\) to \(g\).
    However, since in this transformation all input arcs of \(v_i \in U\) are removed from \(\ig{f}\), the vertex affected by the transformation will lose all of its input arcs, including the arc inside the cycle, leading to \(C\) being broken in \(\ig{g}\), which is a contradiction.
    This implies that \(C\) cannot become strong in \(\ig{g}\).
    Hence, \(\ig{g}\) contains no strong even cycles apart from the self-even-cycles corresponding to the variables in \(U\).
    
    For each Boolean assignment \(x\) w.r.t. variables in \(U\), we build a new BN \(g^x\) such that \(\var{g^x} = \var{f}\), \(g^x_{v_i} = x(v_i)\) for every \(v_i \in U\) and \(g^x_{v_i} = g_{v_i}\) for every \(v_i \in \var{f} \setminus U\).
    It is easy to see that the attractors of \(\astg{g^x}\) are identical to the attractors of \(\astg{g}\) that agree with \(x\), since all the variables in \(U\) are source variables of \(g\) by construction.
    Since \(f\) is an AND-NOT BN, \(g\) and \(g^x\) are clearly AND-NOT BNs.
    By construction, \(\ig{g^x}\) breaks all self-cycles of \(\ig{g}\), thus \(\ig{g^x}\) has no strong even cycles.
    By~\Cref{lem:AND-NOT-BN-no-even-cycle-unique-async-att}, \(\astg{g^x}\) has a unique attractor.

    There are \(2^{|U|}\) possible Boolean assignments w.r.t. variables in \(U\).
    This implies that \(\astg{g}\) has at most \(2^{|U|}\) attractors.
    By~\Cref{prop:att-sub-graph}, \(\astg{f}\) has at most \(2^{|U|}\) attractors. \qed%
\end{proof}

\begin{example}
    Consider again the AND-NOT BN \(f\) given in~\Cref{exam:BN-dynamics-attractor}.
    The global influence graph \(\ig{f}\) has only one even cycle, namely (\(C_1 = a \xrightarrow{\ominus} b \xrightarrow{\ominus} a\)), so the size of the minimal even feedback vertex set is 1.
    By the upper bound proved in~\cite{DBLP:journals/dam/Richard09} (based on even feedback vertex sets), \(\astg{f}\) has at most 2 attractors.
    Even cycle \(C_1\) has a delocalizing triple (namely \((c, a, b)\)).
    By~\Cref{theo:AND-NOT-BN-num-fix-cycle-delo-triple}, \(f\) has at most 1 fixed point.
    Indeed, \(f\) has no fixed point.
    By applying~\Cref{theo:AND-NOT-BN-num-async-att-strong-even-cycle}, we obtain \(\astg{f}\) has at most 1 attractor.
    Indeed, \(\astg{f}\) has a unique cyclic attractor.
\end{example}
\section{Relating the number of asynchronous attractors to dominating sets}\label{sec:domi-sets-async-atts}

The number of fixed points in AND-NOT BNs has also been connected to the size of dominating sets in~\cite{https://doi.org/10.48550/arxiv.1211.5633}. The key intuition is that certain patterns of influence incident on an even cycle take away the cycle's potential for multi-stability. Such cycles, referred to as inconsistent, do not contribute to the diversity of fixed points. The concept of dominating set was introduced to refer to the determinant of fixed point diversity in AND-NOT BNs. 

\begin{definition}[\cite{https://doi.org/10.48550/arxiv.1211.5633}]\label{def:consistent-cycle}
    A cycle \(C\) of a signed directed graph \(W\) is called \emph{inconsistent} if there is a vertex \(k_C\) such that there is a positive path of the form \(k_C \xrightarrow{\oplus} i_1 \xrightarrow{\oplus} \dots \xrightarrow{\oplus} i_r \xrightarrow{\oplus} t_C\) and a negative path of the form \(k_C \xrightarrow{\oplus} j_1 \xrightarrow{\oplus} \dots \xrightarrow{\oplus} j_r \xrightarrow{\ominus} u_C\), such that: 
    \begin{itemize}
	\item \(t_C\) and \(u_C\) are distinct vertices of \(C\).
	\item \(k_C \xrightarrow{\oplus} t_C\) and \(k_C \xrightarrow{\ominus} u_C\) are not arcs of \(C\).
	\item \(|I_{j_1}| = \dots = |I_{j_r}| = 1\) where \(I_{j}\) is the set of input arcs of vertex \(j\) in \(W\).
    \end{itemize}
    If such a vertex \(k_C\) does not exist, \(C\) is called \emph{consistent}.
\end{definition}

To see in what way an inconsistent cycle loses multi-stability, consider that \(k_C\) has a fixed value. The requirement that each mediator of the negative path between  \(k_C\)  and  \(u_C\) has a single input ensures that all the mediators adopt the same value as \(k_C\). Thus, one of the inputs to \(u_C\) is the value opposite of \(k_C\), which connects with AND to the value of a within-cycle input. As a consequence, if the fixed value of \(k_C\) is 1, \(u_C\) will adopt the value of 0, regardless of the within-cycle input, which does not allow multi-stability. If the fixed value of \(k_C\) is 0, all the members of the positive path from \(k_C\) to \(t_C\) will adopt the state 0. 

\begin{definition}[\cite{https://doi.org/10.48550/arxiv.1211.5633}]\label{def:dominating-set}
    A subset of vertices \(J\) \emph{dominates} a signed directed graph \(W\) if
    \begin{itemize}
	\item \(J\) intersects every consistent even cycle of \(W\);
	\item for each even cycle \(C\) that has no delocalizing triples and is inconsistent, \(J\) intersects \(C\) or \(J\) contains at least one \(k_C\) that makes \(C\) inconsistent.
    \end{itemize}
\end{definition}

The intuition behind the dominating set is that fixing the value of all vertices in a dominating set ensures the fixed value of a vertex for every even cycle.

\begin{theorem}[Theorem 3.7 of~\cite{https://doi.org/10.48550/arxiv.1211.5633}]\label{theo:AND-NOT-BN-num-fix-domi-set}
    Let \(f\) be an AND-NOT BN.
    Let \(U\) be a subset of \(\var{f}\) that dominates \(\ig{f}\).
    Then \(f\) has at most \(2^{|U|}\) fixed points.
\end{theorem}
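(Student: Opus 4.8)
The plan is to imitate the structure of the proof of~\Cref{theo:AND-NOT-BN-num-async-att-strong-even-cycle}: pin the variables of \(U\), reduce to a base case with an empty witness set, and multiply by the \(2^{|U|}\) possible assignments. For each Boolean assignment \(x \colon U \to \twod{}\) let \(g^x\) be the AND-NOT BN with \(\var{g^x} = \var{f}\), \(g^x_v = x(v)\) for \(v \in U\), and \(g^x_v = f_v\) for \(v \notin U\). A state \(s\) is a fixed point of \(f\) with \(s|_U = x\) if and only if it is a fixed point of \(g^x\); hence the fixed points of \(f\) form the disjoint union, over the \(2^{|U|}\) choices of \(x\), of the fixed-point sets of the networks \(g^x\), and it suffices to show that each \(g^x\) has at most one fixed point.

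To analyze \(g^x\) I would \emph{percolate} the pinned values: iteratively, whenever some variable has a constant update function, substitute its value into the remaining update functions; this keeps the network AND-NOT and preserves the set of fixed points (up to the obvious restriction). Let \(h^x\) be the reduced network. Since every surviving update function is obtained from an update function of \(f\) by substituting constants into a conjunction, \(\ig{h^x}\) has vertex set \(\var{h^x}\subseteq\var{f}\) and its arc set is contained, with the same signs, in that of \(\ig{f}\); in particular every even cycle of \(\ig{h^x}\) is an even cycle of \(\ig{f}\) all of whose vertices survive percolation. Applying~\Cref{theo:AND-NOT-BN-num-fix-cycle-delo-triple} to \(h^x\) with the empty witness set (equivalently: an AND-NOT BN whose global influence graph has no delocalizing-triple-free even cycle has at most one fixed point), it remains to show that \(\ig{h^x}\) has no delocalizing-triple-free even cycle.

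This last claim is the heart of the argument, and I would prove it by case analysis on an even cycle \(C\) that survives percolation, showing it must retain a delocalizing triple. If \(C\) is consistent, then by the first clause of~\Cref{def:dominating-set} the set \(U\) meets \(C\); the intersected vertex has a constant function in \(g^x\) and is removed, so \(C\) does not survive — contradiction. If \(C\) is inconsistent and delocalizing-triple-free, then either \(U\) meets \(C\) (as above) or \(U\) contains a witness \(k_C\); in the latter case the single-input condition on the mediators of~\Cref{def:consistent-cycle} propagates the pinned value of \(k_C\) along the witness paths — the literal \(\neg j_r\) forces \(u_C\) to \(0\) when \(x(k_C)=1\), and the literals \(k_C, i_1,\dots, i_r\) force \(t_C\) to \(0\) when \(x(k_C)=0\) — so a vertex of \(C\) is percolated away, again a contradiction. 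Hence a surviving even cycle \(C\) must have a delocalizing triple \((t,v_1,v_2)\) in \(\ig{f}\); and this triple survives in \(\ig{h^x}\), because if \(t\) were percolated to value \(b\) then the literal \(+t\) (if \(b=0\)) or \(-t\) (if \(b=1\)) would force \(v_1\in V(C)\) or \(v_2\in V(C)\) to \(0\), contradicting survival of \(C\), so \(t\) survives and the arcs \(t\xrightarrow{\oplus}v_1\), \(t\xrightarrow{\ominus}v_2\) persist. Thus every even cycle of \(\ig{h^x}\) has a delocalizing triple, as required, and summing over \(x\) gives at most \(2^{|U|}\) fixed points of \(f\).

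The main obstacle I anticipate is exactly the structural claim of the previous paragraph, and within it the bookkeeping of the percolation cascade. I must verify that substituting the pinned value of \(k_C\) — and of any other percolated vertex, in particular any \emph{mediator} on a witness path that happens to itself lie in \(U\) — genuinely reaches a vertex of \(C\) with the right sign (the mediator-in-\(U\) configurations are the delicate edge cases: when such a mediator is pinned to the ``wrong'' value the cascade along one witness path stalls and one may need to argue via the other path or rechoose the witness). I also need the routine but essential facts that percolation preserves fixed points and can only delete arcs of the influence graph, never flip a sign or create a new arc, so that ``surviving even cycle of \(\ig{h^x}\)'' really does mean ``even cycle of \(\ig{f}\) with all vertices intact''. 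Everything else — the partition of the fixed points of \(f\), the AND-NOT-ness of \(g^x\) and \(h^x\), and the final count — is straightforward.
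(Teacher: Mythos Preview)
Your plan is essentially the paper's own argument, specialized to fixed points. The paper does \emph{not} prove \Cref{theo:AND-NOT-BN-num-fix-domi-set} directly: it notes that the original proof in the cited work has a gap, revises the notion of inconsistency (\Cref{def:consistent-cycle-new}), proves the stronger attractor bound \Cref{theo:AND-NOT-BN-num-async-att-domi-set}, and then recovers the fixed-point statement as \Cref{cor:AND-NOT-BN-num-fix-domi-set} via the trivial observation that every fixed point is an asynchronous attractor. The technical heart of the paper's proof of \Cref{theo:AND-NOT-BN-num-async-att-domi-set} is exactly what you wrote: pin the variables in \(U\) to obtain \(g^x\), percolate to \(\mathcal{P}^{\omega}(g^x)\), argue that \(\ig{\mathcal{P}^{\omega}(g^x)}\) has no strong even cycle (handling separately the cycles that were already non-strong, and the strong-but-inconsistent ones via propagation from \(k_C\)), and invoke the base case — the paper uses \Cref{lem:AND-NOT-BN-no-even-cycle-unique-async-att}, you use \Cref{theo:AND-NOT-BN-num-fix-cycle-delo-triple} with \(U=\emptyset\). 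What you give up by going directly for fixed points is the stronger attractor bound; what you gain is that you avoid the detour through \Cref{prop:att-sub-graph}, \Cref{lem:BN-one-source-node-sub-graph} and \Cref{prop:BN-percolation-async-att}.

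Two remarks. First, your ``if and only if'' is false: a fixed point of \(g^x\) need not be a fixed point of \(f\) (the equations \(f_v(s)=s_v\) for \(v\in U\) are dropped). Only the forward inclusion holds, but that is all you need for the upper bound. Second — and more importantly — you are right to flag the mediator-in-\(U\) edge case as the real obstacle, and your proposed patches (``argue via the other path or rechoose the witness'') do not work in general. Concretely, take \(f_a=b\land i\), \(f_b=a\land\neg j\), \(f_i=k\land i'\), \(f_{i'}=i\), \(f_j=k\), \(f_k=k\): then \(U=\{k,i\}\) dominates, the unique witness for the cycle \(\{a,b\}\) is \(k_C=k\) with positive path \(k\to i\to a\) and negative path \(k\to j\to b\), yet for \(x(k)=0,\,x(i)=1\) the percolated network still has the strong even cycle \(\{a,b\}\) and hence two fixed points. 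So your intermediate claim ``each \(g^x\) has at most one fixed point'' is false as stated, and the paper's proof of \Cref{theo:AND-NOT-BN-num-async-att-domi-set} glosses over the same step (it asserts without justification that the witness paths lie in \(\ig{g^x}\)). This does not contradict the theorem — in the example \(f\) has three fixed points and \(2^{|U|}=4\) — but it does mean neither argument is complete without further work on exactly this case.
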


We revealed that the positive path and the negative path in~\Cref{def:consistent-cycle} do not need to have the same length.
Moreover, the proof of~\Cref{theo:AND-NOT-BN-num-fix-domi-set} omitted the case that a Boolean function becomes constant when adding a new arc to the global influence graph (see more details in Appendix B of~\cite{https://doi.org/10.48550/arxiv.1211.5633}).
Hereafter, we revise the definition of consistent cycles (\Cref{def:consistent-cycle-new}), but instead of fixing the missing part of the proof, we prove a stronger result on the number of asynchronous attractors of an AND-NOT BN (\Cref{theo:AND-NOT-BN-num-async-att-domi-set}).

\begin{definition}\label{def:consistent-cycle-new}
    A cycle \(C\) of a signed directed graph \(W\) is called \emph{inconsistent} if there is a vertex \(k_C\) such that there is a positive path of the form \(k_C \xrightarrow{\oplus} i_1 \xrightarrow{\oplus} \dots \xrightarrow{\oplus} i_r \xrightarrow{\oplus} t_C\) (\(r \geq 0\)) and a negative path of the form \(k_C \xrightarrow{\oplus} j_1 \xrightarrow{\oplus} \dots \xrightarrow{\oplus} j_m \xrightarrow{\ominus} u_C\) (\(m \geq 0\)), such that:
    \begin{itemize}
	\item \(t_C\) and \(u_C\) are distinct vertices of \(C\).
        \item \(k_C \xrightarrow{\oplus} t_C\) and \(k_C \xrightarrow{\ominus} u_C\) are not arcs of \(C\).
	\item If \(m > 0\), then \(|I_{j_1}| = \dots = |I_{j_m}| = 1\) where \(I_{j}\) is the set of input arcs of vertex \(j\) in \(W\).
    \end{itemize}
    If such a vertex \(k_C\) does not exist, \(C\) is called \emph{consistent}.
\end{definition}

The proof of~\Cref{theo:AND-NOT-BN-num-async-att-domi-set} relies on the concept of \emph{percolation} in a BN.

\begin{definition}\label{def:BN-one-step-percolation}
    Consider a BN \(f\).
    We define the \emph{one-step percolation} of \(f\) (denoted as \(\mathcal{P}(f)\)) as follows: \(\var{\mathcal{P}(f)} = \var{f}\), \(\mathcal{P}(f)_v = f_v\) if \(f_v\) is constant, and otherwise \(\mathcal{P}(f)_v = f'_v\), where \(f'_v\) is the Boolean function obtained by substituting values of constant variables of \(f\) to \(f_v\).
\end{definition}

\begin{definition}\label{def:BN-full-percolation}
    Consider a BN \(f\).
    The \emph{percolation} of \(f\) (denoted by \(\mathcal{P}^{\omega}(f)\)) is obtained by applying \(\mathcal{P}\) starting from \(f\) until it reaches a BN \(f'\) such that \(\mathcal{P}(f') = f'\); this is always possible because the number of variables is finite.
\end{definition}

\begin{proposition}\label{prop:BN-ig-percolation}
    Consider a BN \(f\).
    One-step percolation preserves the number of variables and may reduce the number of edges, i.e., \(V(\ig{\mathcal{P}(f)}) = V(\ig{f})\) and \(E(\ig{\mathcal{P}(f)}) \subseteq E(\ig{f})\).
    Consequently, \(V(\ig{\mathcal{P}^{\omega}(f)}) = V(\ig{f})\) and\\ \(E(\ig{\mathcal{P}^{\omega}(f)}) \subseteq E(\ig{f})\).
\end{proposition}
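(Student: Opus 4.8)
The vertex statement is immediate: by \Cref{def:BN-one-step-percolation} we have $\var{\mathcal{P}(f)} = \var{f}$, and the vertex set of an influence graph is exactly its variable set, so $V(\ig{\mathcal{P}(f)}) = V(\ig{f})$. The real content is $E(\ig{\mathcal{P}(f)}) \subseteq E(\ig{f})$, and the plan is to show that every arc of $\ig{\mathcal{P}(f)}$ already occurs in some local influence graph $\igf{f}{\hat{x}}$ of $f$, hence in $\ig{f}$ by \Cref{def:global-ig}.

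First I would fix notation for the substitution. Let $S \subseteq \var{f}$ be the set of variables $u$ for which $f_u$ is constant, and let $c(u) \in \twod{}$ denote that constant. For a state $y$ of $f$ define $\hat{y}$ by $\hat{y}(u) = c(u)$ for $u \in S$ and $\hat{y}(w) = y(w)$ for $w \notin S$. By \Cref{def:BN-one-step-percolation}, if $f_v$ is constant then $\mathcal{P}(f)_v = f_v$, so $v$ has no incoming arc in either graph and may be disregarded; if $f_v$ is not constant then $\mathcal{P}(f)_v$ is $f_v$ with the variables of $S$ replaced by their constants, which (viewing update functions as defined on all states by ignoring non-essential coordinates, as the paper already does) means $\mathcal{P}(f)_v(y) = f_v(\hat{y})$ for every state $y$. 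I would also record the elementary fact that restricting a Boolean function to a subcube cannot create a new essential dependence, so $\IN{\mathcal{P}(f)_v} \subseteq \IN{f_v} \setminus S$; in particular the tail of any arc into $v$ in $\ig{\mathcal{P}(f)}$ lies outside $S$.

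Now I would take an arbitrary arc $(v_j v_i, \oplus) \in E(\ig{\mathcal{P}(f)})$ (the $\ominus$ case is identical with the inequality reversed). By \Cref{def:global-ig} and \Cref{def:local-ig}, $f_{v_i}$ is non-constant and there is a state $x$ with $\mathcal{P}(f)_{v_i}(x[v_j \leftarrow 0]) < \mathcal{P}(f)_{v_i}(x[v_j \leftarrow 1])$, and by the previous paragraph $v_j \notin S$. Since $v_j \notin S$, setting $v_j$ and taking the hat commute: $\widehat{x[v_j \leftarrow a]} = \hat{x}[v_j \leftarrow a]$ for $a \in \twod{}$. Applying $\mathcal{P}(f)_{v_i}(y) = f_{v_i}(\hat{y})$ at $y = x[v_j \leftarrow 0]$ and $y = x[v_j \leftarrow 1]$ therefore gives $f_{v_i}(\hat{x}[v_j \leftarrow 0]) < f_{v_i}(\hat{x}[v_j \leftarrow 1])$, i.e.\ $(v_j v_i, \oplus) \in E(\igf{f}{\hat{x}}) \subseteq E(\ig{f})$. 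This proves $E(\ig{\mathcal{P}(f)}) \subseteq E(\ig{f})$.

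For the ``consequently'' clause I would simply iterate: by \Cref{def:BN-full-percolation}, $\mathcal{P}^{\omega}(f) = \mathcal{P}^{k}(f)$ for some finite $k$, so applying the one-step statement $k$ times and chaining the inclusions yields $V(\ig{\mathcal{P}^{\omega}(f)}) = V(\ig{f})$ and $E(\ig{\mathcal{P}^{\omega}(f)}) \subseteq E(\ig{f})$. I do not expect a genuine obstacle here; the only delicate points are the two routine observations used above, namely that substituting constants never converts a non-influence into an influence, and that overriding a non-constant variable commutes with the constant substitution.
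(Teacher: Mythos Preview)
Your proof is correct and follows essentially the same route as the paper: given a witness state $x$ for an arc of $\ig{\mathcal{P}(f)}$, produce a witness state for the same arc of $\ig{f}$ by overwriting the constant variables with their constant values (your $\hat{x}$ is exactly the paper's $y$), then iterate for $\mathcal{P}^{\omega}$. If anything, you are slightly more careful than the paper in making explicit that the tail $v_j$ lies outside $S$ and that the hat operation commutes with the override $[v_j\leftarrow a]$, points the paper leaves implicit.
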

\begin{proof}
    By definition, \(V(\ig{\mathcal{P}(f)}) = \var{\mathcal{P}(f)} = \var{f} = V(\ig{f})\).
    Let \((uv, \varepsilon)\) be an arc in \(\ig{\mathcal{P}(f)}\).
    We consider the case \(\varepsilon = \oplus\), the case \(\varepsilon = \ominus\) is similar.
    By the definition of the local influence graph (\Cref{def:local-ig}), there is a state \(x\) such that \(\mathcal{P}(f)_v(x[u \leftarrow 0]) < \mathcal{P}(f)_v(x[u \leftarrow 1])\).
    Let \(C_v\) be the set of constant variables that intersect \(\IN{f_v}\).
    By definition, \(\mathcal{P}(f)_v\) is obtained by substituting the values of the variables in \(C_v\) into \(f_v\).
    Choose state \(y\) such that \(y(v_i) = x(v_i)\) for every \(v_i \in \var{f} \setminus C_v\) and \(y(v_i) = f_{v_i}\) for every \(v_i \in C_v\).
    We have that \(f_v(y[u \leftarrow 0]) = \mathcal{P}(f)_v(x[u \leftarrow 0]) < \mathcal{P}(f)_v(x[u \leftarrow 1]) = f_v(y[u \leftarrow 1])\).
    Hence, \((uv, \varepsilon)\) is also an arc in \(\ig{f}\).
    Now we can conclude that \(E(\ig{\mathcal{P}(f)}) \subseteq E(\ig{f})\).
    It immediately follows that \(V(\ig{\mathcal{P}^{\omega}(f)}) = V(\ig{f})\) and \(E(\ig{\mathcal{P}^{\omega}(f)}) \subseteq E(\ig{f})\). \qed%
\end{proof}

\begin{proposition}[\cite{DBLP:journals/nc/KlarnerBS15,trinh2024mapping}]\label{prop:BN-percolation-async-att}
    Given a BN \(f\),
    \(\astg{\mathcal{P}(f)}\) and \(\astg{f}\) have the same set of attractors.
    Consequently, \(\astg{\mathcal{P}^{\omega}(f)}\) and \(\astg{f}\) have the same set of attractors.
\end{proposition}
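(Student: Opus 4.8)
The plan is to reduce everything to the subcube $S$ of states that agree with the constant variables of $f$, and to observe that on $S$ the transition relations of $\astg{f}$ and $\astg{\mathcal{P}(f)}$ are identical. Concretely, let $C$ be the set of constant variables of $f$, write $c_v \in \twod{}$ for the value of $f_v$ when $v \in C$, and put $S = \{x \in \twod{}^{|\var{f}|} : x_v = c_v \text{ for all } v \in C\}$. First I would show that every attractor of $\astg{f}$ lies inside $S$: given an attractor $A$ and $v \in C$, the set $A' = \{x \in A : x_v = c_v\}$ is a nonempty trap set of $\astg{f}$ --- nonempty because from any $x \in A$ with $x_v \neq c_v$ the arc updating $v$ leads to $x[v \leftarrow c_v]$, which lies in $A$ since $A$ is a trap set; a trap set because no arc out of $x \in A'$ can update $v$ (that would need $f_v(x) \neq x_v$, impossible since $f_v(x) = c_v = x_v$) and any arc updating another variable preserves the $v$-coordinate and stays in $A$. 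Subset-minimality of $A$ then forces $A' = A$, and ranging over $v \in C$ gives $A \subseteq S$. The same argument applies verbatim to $\mathcal{P}(f)$, since $\mathcal{P}(f)_v = f_v$ for every $v \in C$, so every attractor of $\astg{\mathcal{P}(f)}$ is also contained in $S$.

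Next I would check that $S$ is a trap set in both $\astg{f}$ and $\astg{\mathcal{P}(f)}$ and that the induced subgraphs coincide, $\astg{f}[S] = \astg{\mathcal{P}(f)}[S]$. For $x \in S$: updating a variable $v \in C$ yields no arc in either graph (its value already equals $c_v$); and for $w \notin C$, the function $\mathcal{P}(f)_w$ is obtained from $f_w$ by substituting the values $c_v$, which are exactly the values that $x$ already carries on $C$, so $\mathcal{P}(f)_w(x) = f_w(x)$ and the arc updating $w$ out of $x$ (if it exists) is the same in both graphs; such an arc cannot leave $S$ because it does not touch any coordinate in $C$. Putting the two steps together: because all attractors of $\astg{f}$ are contained in the trap set $S$, the attractors of $\astg{f}$ coincide with those of $\astg{f}[S]$, and symmetrically for $\mathcal{P}(f)$; since $\astg{f}[S] = \astg{\mathcal{P}(f)}[S]$, the two attractor sets are equal. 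The ``consequently'' clause then follows by induction on the number of one-step percolations needed to reach $\mathcal{P}^{\omega}(f)$, which is finite by~\Cref{def:BN-full-percolation}.

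The step I expect to be the main obstacle is the first one --- pinning the attractors to $S$. The delicate point is that it must be carried out using only the constant variables of $f$, whose values are preserved by $\mathcal{P}$, and not conflated with the possibly larger set of constant variables of $\mathcal{P}(f)$; it rests on the fact that in the asynchronous state transition graph a variable with constant update function $c_v$ can never move away from $c_v$ and, from any state, can always move toward it, so no $\subseteq$-minimal trap set can contain a state that disagrees with it. The remaining verifications (that $S$ is a trap set, that the induced subgraphs match, and the inductive bootstrap to $\mathcal{P}^{\omega}(f)$) are routine once this observation is in place.
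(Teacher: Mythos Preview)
The paper does not actually give a proof of this proposition: it is stated with a citation to~\cite{DBLP:journals/nc/KlarnerBS15,trinh2024mapping} and used as a black box in the proof of~\Cref{theo:AND-NOT-BN-num-async-att-domi-set}. So there is no in-paper argument to compare yours against.

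That said, your proposal is a correct and self-contained proof. The decomposition via the subcube \(S\) of states agreeing with the constant variables of \(f\) is exactly right: the argument that every attractor of \(\astg{f}\) (and of \(\astg{\mathcal{P}(f)}\)) lies in \(S\) by the trap-set-and-minimality trick is sound, and the key observation that on \(S\) the update functions of \(f\) and \(\mathcal{P}(f)\) agree pointwise (since substituting the constants is a no-op on states already matching them) makes the induced subgraphs identical. Since \(S\) is a trap set in both transition graphs, the attractors of each full graph coincide with those of the common induced subgraph, and the equality follows. Your caveat about using only the constant variables of \(f\), not those of \(\mathcal{P}(f)\), is well placed and correctly handled. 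The inductive step to \(\mathcal{P}^{\omega}(f)\) is immediate.
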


\begin{theorem}\label{theo:AND-NOT-BN-num-async-att-domi-set}
    Let \(f\) be an AND-NOT BN.
    Let \(U\) be a subset of \(\var{f}\) that dominates \(\ig{f}\).
    Then the number of attractors of \(\astg{f}\) is at most \(2^{|U|}\).
\end{theorem}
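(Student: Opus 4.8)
The plan is to follow the architecture of the proof of \Cref{theo:AND-NOT-BN-num-async-att-strong-even-cycle}, inserting a \emph{percolation} step before appealing to \Cref{lem:AND-NOT-BN-no-even-cycle-unique-async-att}. First I would let $g$ be the BN obtained from $f$ by turning every variable of $U$ into a source variable; applying \Cref{lem:BN-one-source-node-sub-graph} once per variable of $U$ gives $V(\astg{g}) = V(\astg{f})$ and $E(\astg{g}) \subseteq E(\astg{f})$, so by \Cref{prop:att-sub-graph} it suffices to bound the number of attractors of $\astg{g}$ by $2^{|U|}$. Since the variables of $U$ are source variables of $g$, the attractors of $\astg{g}$ are partitioned by the Boolean assignment $x$ on $U$ they realize, and the block for $x$ is exactly the set of attractors of the AND-NOT BN $g^x$ that fixes each $v \in U$ to $x(v)$ and coincides with $f$ elsewhere. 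Hence it is enough to prove that $\astg{g^x}$ has a unique attractor for every $x$.

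To this end I would set $h := \mathcal{P}^{\omega}(g^x)$. Substituting constants into conjunctions of literals again yields conjunctions of literals (or constants), so $h$ is an AND-NOT BN, and by \Cref{prop:BN-percolation-async-att} it has the same attractors as $g^x$. By \Cref{lem:AND-NOT-BN-no-even-cycle-unique-async-att} it remains to establish the key claim: \emph{$\ig{h}$ has no strong even cycle.} Suppose not, and let $C$ be a strong even cycle of $\ig{h}$. A constant variable of $h$ has no in-arcs in $\ig{h}$, so no vertex of $C$ is constant in $h$; in particular $C$ avoids $U$, since the variables of $U$ are constant in $g^x$ and stay constant under percolation. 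By \Cref{prop:BN-ig-percolation} the arcs of $\ig{h}$ lie in $\ig{g^x}$ with their signs, and since $g^x$ agrees with $f$ away from $U$, the signed in-arcs of any variable outside $U$ are the same in $\ig{g^x}$ and $\ig{f}$; hence $C$, with its signs, is an even cycle of $\ig{f}$.

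Now I would use that $U$ dominates $\ig{f}$ (\Cref{def:dominating-set}) and distinguish three cases for $C$ viewed in $\ig{f}$. If $C$ is consistent, then $U$ intersects $C$, contradicting that $C$ avoids $U$. If $C$ is inconsistent but has a delocalizing triple $(t, v_1, v_2)$ in $\ig{f}$, I would show it persists in $\ig{h}$: the vertices $v_1, v_2$ lie on $C$ and are non-constant in $h$, so $t$ cannot be constant in $h$ (a constant value of $t$ would, through $t \xrightarrow{\oplus} v_1$ or $t \xrightarrow{\ominus} v_2$ and the conjunctive form of AND-NOT updates, drive $v_1$ or $v_2$ to the constant $0$); therefore the literal of $t$ still occurs, with the right sign, in $h_{v_1}$ and $h_{v_2}$, so $(t,v_1,v_2)$ is a delocalizing triple of $C$ in $\ig{h}$, contradicting that $C$ is strong there. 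The last case is $C$ strong and inconsistent in $\ig{f}$; since $C$ avoids $U$, domination forces $U$ to contain a witness $k_C$ as in \Cref{def:consistent-cycle-new}, with a positive path to $t_C \in V(C)$ and a negative path with single-input mediators to $u_C \in V(C)$. As $k_C \in U$ is pinned to $x(k_C)$ in $g^x$, I would trace the percolation: if $x(k_C)=0$, the positive path forces $t_C$ to the constant $0$; if $x(k_C)=1$, the single-input mediators of the negative path all percolate to the constant $1$, forcing $u_C$ to the constant $0$. Either way a vertex of $C$ becomes constant in $h$, contradicting that $C$ is a cycle of $\ig{h}$. This proves the key claim, hence the theorem.

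I expect the main difficulty to be the careful bookkeeping of these percolation arguments: showing that substituting the relevant forced value into a conjunction of literals either collapses the update function to the constant $0$ or leaves it essentially dependent on exactly the intended variable, and controlling the order in which variables are resolved by iterated one-step percolation. The sharpest point is the last case, where the positive and negative paths witnessing $k_C$ need not avoid $U$; when such a path meets a further variable of $U$ that is pre-fixed by $x$, one cannot simply follow the path, and the propagation argument must be restarted from, or rerouted around, that vertex. A minor additional check is that passing from $f$ to $g$, then to $g^x$, then to $\mathcal{P}^{\omega}(g^x)$ keeps everything within the AND-NOT class so that \Cref{lem:AND-NOT-BN-no-even-cycle-unique-async-att} applies.
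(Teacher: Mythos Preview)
Your plan is essentially the paper's proof: reduce from $f$ to $g$ (sources at $U$) via \Cref{lem:BN-one-source-node-sub-graph} and \Cref{prop:att-sub-graph}, split by assignments $x$ on $U$ into the AND-NOT networks $g^x$, percolate to $\mathcal{P}^{\omega}(g^x)$, and argue that its influence graph has no strong even cycle so that \Cref{lem:AND-NOT-BN-no-even-cycle-unique-async-att} together with \Cref{prop:BN-percolation-async-att} gives a unique attractor for each $x$; your three cases (consistent / non-strong / strong inconsistent in $\ig{f}$) and the percolation arguments in each coincide with the paper's treatment. On the difficulty you flag in the last case, the paper handles it by writing the inconsistency paths for $k_C$ as paths of $\ig{g^x}$ rather than of $\ig{f}$; once that is in hand, every intermediate vertex $i_\ell,j_\ell$ carries an in-arc in $\ig{g^x}$, hence is not a constant of $g^x$, hence lies outside $U$, and the propagation runs straight through without any restarting or rerouting---the paper does not explicitly argue why these paths, which domination only supplies in $\ig{f}$, survive in $\ig{g^x}$, so your caution about this step is well placed.
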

\begin{proof}
    Let \(g\) be the BN such that \(g_{v_i} = v_i\) for every \(v_i \in U\) and \(g_{v_i} = f_{v_i}\) for every \(v_i \in \var{f} \setminus U\).
    By applying~\Cref{lem:BN-one-source-node-sub-graph} sequentially over variables in \(U\), we obtain \(V(\astg{g}) = V(\astg{f})\) and \(E(\astg{g}) \subseteq E(\astg{f})\).
    
    By construction, \(\ig{g}\) breaks all consistent even cycles in \(\ig{f}\).
    For any even cycle \(C\) that is strong but inconsistent in \(\ig{f}\), it is broken in \(\ig{g}\) (in case \(U\) intersects \(C\)) or it is retained in \(\ig{g}\) (in case \(U\) does not intersect \(C\)) but there is a vertex \(k_C\) that makes \(C\) inconsistent and \(U\) intersects \(k_C\).
    In addition, \(|U|\) (self) even cycles are added in \(\ig{g}\) since every \(v \in U\) is a source variable.
    Note that, since only input arcs of \(v \in U\) are removed in the transformation from \(\ig{f}\) to \(\ig{g}\), \(\ig{g}\) does not create new strong even cycles apart from the added self-cycles.
    
    For each Boolean assignment \(x\) w.r.t. variables in \(U\), we build a new BN \(g^x\) such that \(\var{g^x} = \var{f}\), \(g^x_{v_i} = x(v_i)\) for every \(v_i \in U\) and \(g^x_{v_i} = g_{v_i}\) for every \(v_i \in \var{g} \setminus U\).
    It is easy to see that the attractors of \(\astg{g^x}\) are identical to the attractors of \(\astg{g}\) that agree with the assignment \(x\), since all the variables in \(U\) are source variables of \(g\) by construction.

    By construction, \(\ig{g^x}\) breaks all self even cycles of \(\ig{g}\).
    It follows that if \(\ig{g^x}\) contains a strong even cycle \(C\), then there is a vertex \(k_C\) that makes \(C\) inconsistent and \(U\) intersects \(k_C\).

    Consider the percolation of \(g^x\), \(\mathcal{P}^{\omega}(g^x)\).
    Since \(f\) is an AND-NOT BN, \(g\), \(g^x\), and \(\mathcal{P}^{\omega}(g^x)\) are clearly AND-NOT BNs.
    By~\Cref{prop:BN-ig-percolation}, \(E(\ig{\mathcal{P}^{\omega}(g^x)}) \subseteq E(\ig{g^x})\).
    This implies that \(\ig{\mathcal{P}^{\omega}(g^x)}\) does not create new even cycles apart from the ones already in \(\ig{g^x}\).
    Let \(C\) be an even cycle that is not strong in \(\ig{g^x}\).
    Let \((u, v_1, v_2)\) be a delocalizing triple of \(C\) in \(\ig{g^x}\).
    Assume that \(C\) becomes strong in \(\ig{\mathcal{P}^{\omega}(g^x)}\).
    Then \(u \xrightarrow{\oplus} v_1\) or \(u \xrightarrow{\ominus} v_2\) needs to be removed during percolation.
    By the construction of \(\mathcal{P}^{\omega}(g^x)\), \(v_1\) or \(v_2\) becomes constant (which breaks the cycle) or \(u\) becomes constant.
    Regarding the latter case, if \(u\) becomes 0 then \(v_1\) eventually becomes 0, if \(u\) becomes 1 then \(v_2\) eventually becomes 0.
    Hence, \(C\) eventually disappears in \(\ig{\mathcal{P}^{\omega}(g^x)}\), which is a contradiction.
    This implies that \(C\) cannot become strong in \(\ig{\mathcal{P}^{\omega}(g^x)}\).
    This means that \(\ig{\mathcal{P}^{\omega}(g^x)}\) does not create new strong even cycles apart from the ones already in \(\ig{g^x}\).
    Let \(C\) be a strong even cycle in \(\ig{g^x}\).
    Then there is a vertex \(k_C \in U\) that makes \(C\) inconsistent, i.e., \(\ig{g^x}\) contains a positive path of the form \(k_C \xrightarrow{\oplus} i_1 \xrightarrow{\oplus} \dots \xrightarrow{\oplus} i_r \xrightarrow{\oplus} t_C\) (\(r \geq 0\)) and a negative path of the form \(k_C \xrightarrow{\oplus} j_1 \xrightarrow{\oplus} \dots \xrightarrow{\oplus} j_m \xrightarrow{\ominus} u_C\) (\(m \geq 0\)), such that: \(t_C\) and \(u_C\) are distinct; \(k_C \xrightarrow{\oplus} t_C\) and \(k_C \xrightarrow{\ominus} u_C\) are not arcs of \(C\); if \(m > 0\), then \(|I_{j_1}| = \dots = |I_{j_m}| = 1\) where \(I_{j}\) is the set of input arcs of vertex \(j\) in \(W\).
    Since \(k_C \in U\), \(g^x_{k_C}\) is a constant function.
    We have two cases as follows.
    Case 1: \(g^x_{k_C} = 0\). 
    Since \(g^x_{i_1} = k_c \land \dots\), \(\mathcal{P}^{\omega}(g^x)_{i_1} = 0\).
    By repeating this reasoning, we have \(\mathcal{P}^{\omega}(g^x)_{t_C} = 0\).
    Case 2: \(g^x_{k_C} = 1\). 
    Then \(\mathcal{P}^{\omega}(g^x)_{j_1} = 1\) because \(|I_{j_1}| = 1\).
    By repeating this reasoning, we have \(\mathcal{P}^{\omega}(g^x)_{j_m} = 1\), leading to \(\mathcal{P}^{\omega}(g^x)_{u_C} = 0\).
    In both cases, \(C\) is broken in \(\ig{\mathcal{P}^{\omega}(g^x)}\).
    Now, we can deduce that \(\ig{\mathcal{P}^{\omega}(g^x)}\) has no strong even cycle.
    
    By~\Cref{lem:AND-NOT-BN-no-even-cycle-unique-async-att}, \(\astg{\mathcal{P}^{\omega}(g^x)}\) has a unique attractor.
    By~\Cref{prop:BN-percolation-async-att}, \(\astg{g^x}\) has a unique attractor.
    There are \(2^{|U|}\) possible Boolean assignments w.r.t. variables in \(U\).
    This implies that \(\astg{g}\) has at most \(2^{|U|}\) attractors.
    By~\Cref{prop:att-sub-graph}, \(\astg{f}\) has at most \(2^{|U|}\) attractors. \qed%
\end{proof}

\begin{corollary}\label{cor:AND-NOT-BN-num-fix-domi-set}
    Let \(f\) be an AND-NOT BN.
    Let \(U\) be a subset of \(\var{f}\) that dominates \(\ig{f}\).
    Then the number of fixed points of \(f\) is at most \(2^{|U|}\).
\end{corollary}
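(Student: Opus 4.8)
The plan is to derive this as an immediate corollary of~\Cref{theo:AND-NOT-BN-num-async-att-domi-set} by observing that fixed points of $f$ sit inside the attractor set of $\astg{f}$ as the singleton attractors. First I would recall that, by definition, a fixed point of $f$ is a state $s \in \twod{}^{|\var{f}|}$ with $f_v(s) = s_v$ for every $v \in \var{f}$. For such a state, \Cref{def:asynchronous-stg} gives no outgoing arc from $s$ in $\astg{f}$: there is no variable $v_i$ with $f_i(s) \neq s_i$. Hence $\{s\}$ is a non-empty subset of $V(\astg{f})$ with no arc leaving it, i.e.\ a trap set, and being a singleton it is trivially $\subseteq$-minimal, so $\{s\}$ is an attractor of $\astg{f}$ in the sense of~\Cref{def:digraph-attractor} (a fixed point, in fact).

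Next I would set up the injection. The map $s \mapsto \{s\}$ sends fixed points of $f$ to attractors of $\astg{f}$, and it is clearly injective since distinct states give distinct singletons. Therefore the number of fixed points of $f$ is at most the number of attractors of $\astg{f}$.

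Finally, applying~\Cref{theo:AND-NOT-BN-num-async-att-domi-set} to the AND-NOT BN $f$ with the dominating set $U$ of $\ig{f}$, the number of attractors of $\astg{f}$ is at most $2^{|U|}$. Combining this with the previous paragraph yields that the number of fixed points of $f$ is at most $2^{|U|}$, as claimed. There is no real obstacle here: the only point requiring a line of justification is that a fixed point of $f$ genuinely is an attractor of $\astg{f}$ (terminality plus minimality of the singleton), and everything else is the counting argument together with the already-established~\Cref{theo:AND-NOT-BN-num-async-att-domi-set}.

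\begin{proof}
Every fixed point $s$ of $f$ satisfies $f_v(s) = s_v$ for all $v \in \var{f}$, so by~\Cref{def:asynchronous-stg} there is no arc of $\astg{f}$ leaving $\{s\}$; as a singleton, $\{s\}$ is then a $\subseteq$-minimal trap set, hence an attractor of $\astg{f}$ (indeed a fixed point of $\astg{f}$ in the sense of~\Cref{def:digraph-attractor}). The assignment $s \mapsto \{s\}$ is an injection from the set of fixed points of $f$ into the set of attractors of $\astg{f}$, so the number of fixed points of $f$ is at most the number of attractors of $\astg{f}$. Since $U$ dominates $\ig{f}$ and $f$ is an AND-NOT BN,~\Cref{theo:AND-NOT-BN-num-async-att-domi-set} gives that $\astg{f}$ has at most $2^{|U|}$ attractors. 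Therefore $f$ has at most $2^{|U|}$ fixed points. \qed
\end{proof}
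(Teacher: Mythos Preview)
Your proof is correct and follows exactly the paper's approach: derive the corollary from \Cref{theo:AND-NOT-BN-num-async-att-domi-set} together with the observation that every fixed point of $f$ is an asynchronous attractor of $\astg{f}$. You simply spell out in more detail why a fixed point yields a singleton attractor, which the paper leaves implicit.
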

\begin{proof}
    This immediately follows from~\Cref{theo:AND-NOT-BN-num-async-att-domi-set} and the fact that a fixed point is an asynchronous attractor. \qed%
\end{proof}

\begin{remark}
    It is easy to see that a consistent cycle is a strong cycle, but the reverse may not be true.
    It follows that if \(U\) intersects every strong even cycle of \(\ig{f}\), then it also dominates \(\ig{f}\).
    Hence,~\Cref{theo:AND-NOT-BN-num-async-att-domi-set} is stronger than~\Cref{theo:AND-NOT-BN-num-async-att-strong-even-cycle}.
\end{remark}

\begin{example}\label{exam:BN-domi-set-async-att-fix}
    Consider the BN \(f\) given as \(\var{f} = \{a, b, c, d\}\) and \(f_a = b \land d\), \(f_b = a \land \neg c\), \(f_c = d\), and \(f_d = \neg c \land d\).
    \Cref{fig:exam-BN-domi-set-async-att-fix} shows the global influence graph \(\ig{f}\).
    This graph has two even cycles \(C_1 = a \xrightarrow{\oplus} b \xrightarrow{\oplus} a\) and \(C_2 = d \xrightarrow{\oplus} d\).
    Even cycle \(C_1\) is strong but \(d\) makes it inconsistent because \(\ig{f}\) has two paths \(d \xrightarrow{\oplus} a\) and \(d \xrightarrow{\oplus} c \xrightarrow{\ominus} b\) (with \(|I_{c}| = |\{d\}| = 1\)), and \((da, \oplus), (db, \ominus) \not \in E(C_1)\).
    Even cycle \(C_2\) is clearly consistent.
    Then \(\{a, d\}\) is a minimal set that intersects every strong even cycle of \(\ig{f}\) and \(\{d\}\) is a minimal dominating set of \(\ig{f}\).
    \Cref{theo:AND-NOT-BN-num-async-att-strong-even-cycle} (resp.\ \Cref{theo:AND-NOT-BN-num-fix-cycle-delo-triple}) gives the upper bound for the number of attractors (resp.\ fixed points) of \(\astg{f}\) as \(2^2 = 4\).
    \Cref{theo:AND-NOT-BN-num-async-att-domi-set} (resp.\ \Cref{cor:AND-NOT-BN-num-fix-domi-set}) gives the upper bound for the number of attractors (resp.\ fixed points) of \(\astg{f}\) as \(2^1 = 2\).
    Indeed, \(\astg{f}\) has a unique attractor that is also a fixed point, namely 0000.
\end{example}

\begin{figure}[!t]
\centering
    \begin{tikzpicture}[node distance=1.0cm and 1.0cm, every node/.style={scale=1.0}, line width = 0.3mm]
        \node[circle, draw, minimum size=7mm] (a) [] {$a$};
        \node[circle, draw, minimum size=7mm] (b) [right=of a] {$b$};
        \node[circle, draw, minimum size=7mm] (c) [right=of b] {$c$};
        \node[circle, draw, minimum size=7mm] (d) [right=of c] {$d$};
		
        \draw[->] (a) edge [bend right=30] node [midway, below, fill=white] {$\oplus$} (b);
            
        \draw[->] (b) edge [] node [midway, above, fill=white] {$\oplus$} (a);
            
        \draw[->] (c) edge [] node [midway, above, fill=white] {$\ominus$} (b);
        \draw[->] (c) edge [bend right=30] node [midway, below, fill=white] {$\ominus$} (d);

        \draw[->] (d) edge [] node [midway, above, fill=white] {$\oplus$} (c);
        \draw[->] (d) edge [bend right=30] node [midway, above, fill=white] {$\oplus$} (a);
        \draw[->] (d) edge [loop right] node [midway, right, fill=white] {$\oplus$} (d);
    \end{tikzpicture}
    \caption{Global influence graph of the BN given in~\Cref{exam:BN-domi-set-async-att-fix}}
\label{fig:exam-BN-domi-set-async-att-fix}
\end{figure}
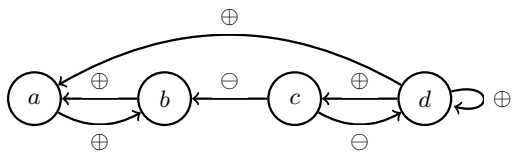

Now, we show that~\Cref{theo:AND-NOT-BN-num-async-att-domi-set} can give a better upper bound for the number of asynchronous attractors via a real-world model given as a logical regulatory graph~\cite{DBLP:journals/bmcsb/SahinFLKBMMSCTPWBA09} (see~\Cref{exam:BN-real-world}).

\begin{example}\label{exam:BN-real-world}
    Consider the logical regulatory graph for the ERBB receptor-regulated G1/S transition protein network derived from published data~\cite{DBLP:journals/bmcsb/SahinFLKBMMSCTPWBA09} (see~\Cref{fig:exam-BN-real-world}). 
    Let \(f\) be the AND-NOT BN specified by this graph.
    Note that EGF is a stimulus, thus we can consider it a source variable, i.e., \(f_{\text{EGF}} = \text{EGF}\).
    In the following we list all even cycles in \(\ig{f}\), indicating the arcs that make them inconsistent or non-strong in parentheses:
    \begin{align*}
        C_1 &= \{\text{EGF}\} \\
        C_2 &= \{\text{Akt1}, \text{IGF1R}\} \quad (\text{ErbB2\_3} \xrightarrow{\oplus} \text{Akt1}, \text{ErbB2\_3} \xrightarrow{\ominus} \text{IGF1R}) \\
        C_3 &= \{\text{Akt1}, \text{ERa}, \text{IGF1R}\} \quad (\text{ErbB2\_3} \xrightarrow{\oplus} \text{Akt1}, \text{ErbB2\_3} \xrightarrow{\ominus} \text{IGF1R}) \\
        C_4 &= \{\text{ERa}, \text{IGF1R}, \text{MEK1}\} \quad (\text{ErbB2\_3} \xrightarrow{\oplus} \text{MEK1}, \text{ErbB2\_3} \xrightarrow{\ominus} \text{IGF1R}) \\
        C_5 &= \{\text{CDK2}, \text{p27}\} \quad (\text{cMYC} \xrightarrow{\oplus} \text{CycE1} \xrightarrow{\oplus} \text{CDK2}, \text{cMYC} \xrightarrow{\ominus} \text{p27}) \\
        C_6 &= \{\text{CDK4}, \text{p27}\} \quad (\text{cMYC} \xrightarrow{\oplus} \text{CycD1} \xrightarrow{\oplus} \text{CDK4}, \text{cMYC} \xrightarrow{\ominus} \text{p27}) \\
        C_7 &= \{\text{CDK4}, \text{p21}\} \quad (\text{cMYC} \xrightarrow{\oplus} \text{CycD1} \xrightarrow{\oplus} \text{CDK4}, \text{cMYC} \xrightarrow{\ominus} \text{p21}) \\
        C_8 &= \{\text{CDK2}, \text{p27}, \text{CDK4}, \text{p21}\} \quad (\text{cMYC} \xrightarrow{\oplus} \text{CycD1} \xrightarrow{\oplus} \text{CDK4}, \text{cMYC} \xrightarrow{\ominus} \text{p21})
    \end{align*}
    It is easy to see that \(\{\text{EGF}, \text{IGF1R}, \text{CDK2}, \text{CDK4}\}\) is a minimal set that intersects every even cycle in \(\ig{f}\).
    The strong even cycles in \(\ig{f}\) are \(C_1\), \(C_5\), \(C_6\), \(C_7\), and \(C_8\).
    It is easy to see that \(\{\text{EGF}, \text{CDK2}, \text{CDK4}\}\) is a minimal set that intersects every strong even cycle in \(\ig{f}\).
    Since only \(C_1\) is consistent and \text{cMYC} makes \(C_5\), \(C_6\), \(C_7\), and \(C_8\) inconsistent, the set \(\{\text{EGF}, \text{cMYC}\}\) is a minimal dominating set of \(\ig{f}\).
    Then the result of~\cite{DBLP:journals/dam/Richard09} gives the upper bound for the number of attractors of \(\astg{f}\) as \(2^4 = 16\).
    \Cref{theo:AND-NOT-BN-num-async-att-strong-even-cycle} gives the upper bound \(2^3 = 8\), whereas~\Cref{theo:AND-NOT-BN-num-async-att-domi-set} gives the upper bound \(2^2 = 4\).
    Using the tool \texttt{biobalm}~\cite{trinh2024mapping} to compute the set of asynchronous attractors, we find that \(\astg{f}\) actually has two attractors.
\end{example}

\begin{figure}[!ht]
\centering
    \includegraphics[scale=0.4]{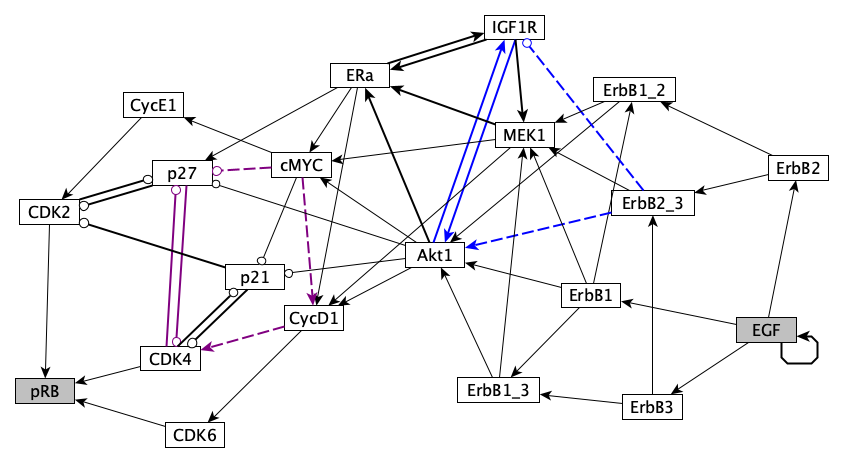}
    \caption{Logical regulatory graph considered in~\Cref{exam:BN-real-world}. Normal arrows denote positive arcs, whereas circle-tipped arrows denote negative arcs. Arcs that participate in even cycles are shown with thick continuous lines. Cycle \(C_2\) is indicated in blue and Cycle \(C_6\) is purple; the inconsistent influences on these cycles are shown with dashed lines.}
\label{fig:exam-BN-real-world}
\end{figure}
\section{Conclusion}\label{sec:conclusion}

In this work, we have established two new upper bounds for the number of asynchronous attractors of an AND-NOT BN based on strong even cycles and dominating sets in the global influence graph of the BN, respectively.
These bounds are tighter than the previous bounds.
These new findings contribute to a more comprehensive understanding of asynchronous dynamics in AND-NOT BNs with useful applications in attractor enumeration and counting, as well as in network design and control.
Beyond systems biology, they also have broader implications in the field of symbolic AI.

In the future, we plan to extend the obtained results to other classes of BNs such as OR-NOT BNs~\cite{Jarrah2010,DBLP:journals/dam/RichardR13}, AND-OR-NOT BNs~\cite{DBLP:journals/jcss/AracenaRS14}, and canalyzing BNs~\cite{DBLP:journals/tcs/LiAMAL13}.
Another direction for future work is to investigate more deeply the applications of the obtained results in the graphical analysis of abstract argumentation frameworks and finite ground normal logic programs.

%
%
%
\bibliographystyle{splncs04}
\bibliography{ref}

\end{document}